\begin{document}

\catchline{}{}{}{}{} 

\markboth{
		N.V.~Kuznetsov,
		D.G.~Arseniev,
		M.V.~Blagov, 
		M.Y.~Lobachev,
		Z.~Wei,
		M.V.~Yuldashev,
		R.V.~Yuldashev
		}{The Gardner problem and cycle slipping bifurcation for type 2 phase-locked loops}

\title{The Gardner problem and cycle slipping bifurcation for type 2 phase-locked loops}

\author{Nikolay V. Kuznetsov}
\address{
	Faculty of Mathematics and Mechanics,
	Saint Petersburg State University, Russia\\
	Faculty of Information Technology,
	University of Jyv\"{a}skyl\"{a}, Finland\\
	Institute for Problems in Mechanical Engineering of the Russian Academy of Sciences, Russia\\
	nkuznetsov239@gmail.com}

\author{Dmitry G. Arseniev}
\address{Peter the Great Saint Petersburg Polytechnic University, Russia \\
Saint Petersburg State University, Russia}

\author{Mikhail V. Blagov}
\address{Faculty of Mathematics and Mechanics,
		Saint Petersburg State University, Russia\\
		Faculty of Information Technology,
		University of Jyv\"{a}skyl\"{a}, Finland}
	
\author{Mikhail Y. Lobachev}
\address{Faculty of Mathematics and Mechanics,
		Saint Petersburg State University, Russia\\
		Industrial Management Department, LUT University, Finland}
	
\author{Zhouchao Wei}
	\address{School of Mathematics and Physics,
		China University of Geosciences, China,}

\author{Marat V. Yuldashev, Renat V. Yuldashev}
\address{Faculty of Mathematics and Mechanics,
	Saint Petersburg State University, Russia}

\maketitle

\begin{history}
\received{(to be inserted by publisher)}
\end{history}

\begin{abstract}
	In the present work, a second-order type 2 PLL with a piecewise-linear phase detector characteristic is analysed. 
	An exact solution to the Gardner problem on the lock-in range is obtained for the considered model.
	The solution is based on a study of cycle slipping bifurcation and improves  well-known engineering estimates.
\end{abstract}

\keywords{Phase-locked loop, PLL, type II PLL, type 2 PLL,
	Gardner problem, lock-in range, cycle slipping, Lyapunov functions, nonlinear analysis, global stability.}

\section{Introduction}

Phase-locked loops (PLLs) are nonlinear control systems which are designed to synchronize a voltage-controlled oscillator (VCO) signal with a reference one.
PLLs have many applications in energy and robotic systems, satellite navigation, wireless and optical communications, cyber-physical systems 
\cite{DuS-2010-communication, Karimi-Ghartemani-2014, rosenkranz2016receiver, BestKLYY-2016, KaplanH-2017-GPS,  KuznetsovVSYY-2020, ZelenskyGVSKC-2021, ZelenskySAAIV-2021, KuznetsovKBTYY-2022-GN}.
Analog PLLs can be described by systems of nonlinear differential equations with periodic right-hand sides, which are also known as pendulum-like systems.
In 1933, F.~Tricomi was the first, who conducted nonlinear analysis \cite{Tricomi-1933} of the systems which are equivalent to the second-order PLLs with lag filters (see, e.g., \cite{Gardner-2005-book}).
It was proven that the global stability of those systems is determined by separatrices of a saddle, which correspond to a heteroclinic bifurcation in the system.
Further, bifurcations of the second-order PLLs with lead-lag filters and different nonlinear characteristics of phase detectors were studied in \cite{AndronovVKh-1937, Kapranov-1956, Belyustina-1959, Gubar-1961, Shakhtarin-1969}.

PLL systems with lag and lead-lag loop filters can be classified as type 1 PLLs, because transfer functions of such filters do not have poles at the origin.
In engineering practice, so-called type 2 PLLs, that have loop filters with exactly one pole at the origin, are most often used nowadays \cite{Gardner-2005-book}.
The second-order type 2 analog PLLs are always globally stable (see, e.g., \cite{KuznetsovLYY-2021-TCASII}), i.e., these PLLs acquire lock for any reference frequency.
However, synchronization in the systems may take long time. 
In order to reduce the long acquisition time, the lock-in concept has been introduced.
According to the concept, the locked PLL re-acquires a locked state without cycle slipping after an abrupt change of the reference frequency. 
The problem of estimation of the reference frequencies where the concept is held was posed by F.~Gardner in his monograph \cite{Gardner-2005-book}.
A rigorous approach to the Gardner problem and analytical estimates of the lock-in range were suggested in \cite{KuznetsovLYY-2015-IFAC-Ranges, KuznetsovLYY-2019-DAN, KuznetsovLYY-2021-TCASII, KuznetsovMYY-2021-TCAS, KuznetsovLYYVS-2021-DAN}.

The system where such abrupt reference frequency change occurs can be considered as a switching system. 
The Gardner problem requires to study cycle slipping bifurcation of the system when a trajectory, starting from an equilibrium of the system before the switch tends to an equilibrium of the system after the switch.
This task is similar to the problem of the heteroclinic bifurcation estimation in type 1 PLL systems.

\section{Mathematical Model and Stability Analysis}
\begin{figure}[h]
	\centering
	\includegraphics[width=0.6\linewidth]{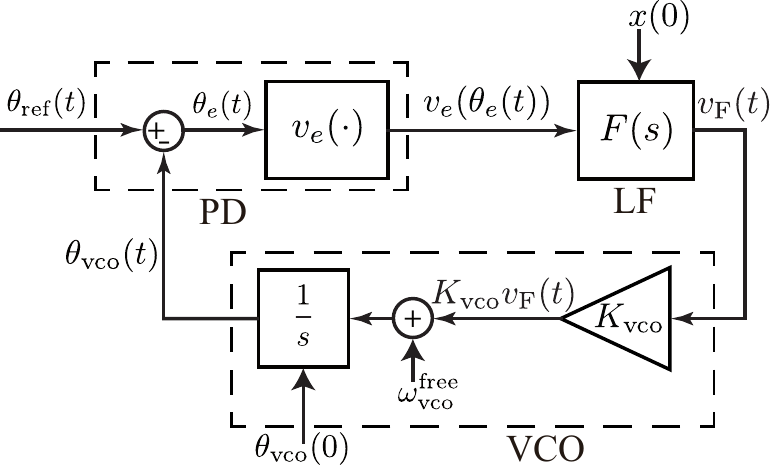}
	\caption{Baseband model of analog PLLs.}
	\label{fig:PLL-model}
\end{figure}

Consider analog PLL baseband model in Fig.~\ref{fig:PLL-model} \cite{Gardner-2005-book, Viterbi-1966, Best-2007, LeonovKYY-2012-TCASII, LeonovKYY-2015-SIGPRO}. 
Here $\theta_{\rm ref}(t) = \omega_{\rm ref}t + \theta_{\rm ref}(0)$ is a phase of the reference signal, 
a phase of the VCO is $\theta_{\rm vco}(t)$, 
$\theta_e(t) = \theta_{\rm ref}(t) - \theta_{\rm vco}(t)$ is a phase error.
A phase detector (PD) generates a signal $v_e(\theta_e(t))$
where $v_e(\cdot)$ is a characteristic of the phase detector. 
In the present paper, a piecewise-linear PD characteristic, which is continuous and corresponds to square waveforms of the reference and the VCO signals, is considered:
\begin{equation}\label{eq:piecewise-linear PD characteristic}
\begin{aligned}
&v_e(\theta_e)
=
\begin{cases}
& k\theta_e - 2\pi k m, 
\qquad -\frac{1}{k} + 2 \pi m \leq \theta_e(t) < \frac{1}{k} + 2 \pi m,\\
& -\frac{1}{\pi - \frac{1}{k}}\theta_e + \frac{1}{\pi - \frac{1}{k}}(\pi + 2\pi m),
\qquad\frac{1}{k} + 2 \pi m \leq \theta_e(t) < -\frac{1}{k} + 2 \pi (m + 1),
\end{cases}
\end{aligned}
\end{equation}
here $k>\frac{1}{\pi},\ m \in \mathbb{Z}$ (see Fig.~\ref{fig:triangular PD characteristic}).

\begin{figure}[h]
	\centering
	\includegraphics[width=0.4\linewidth]{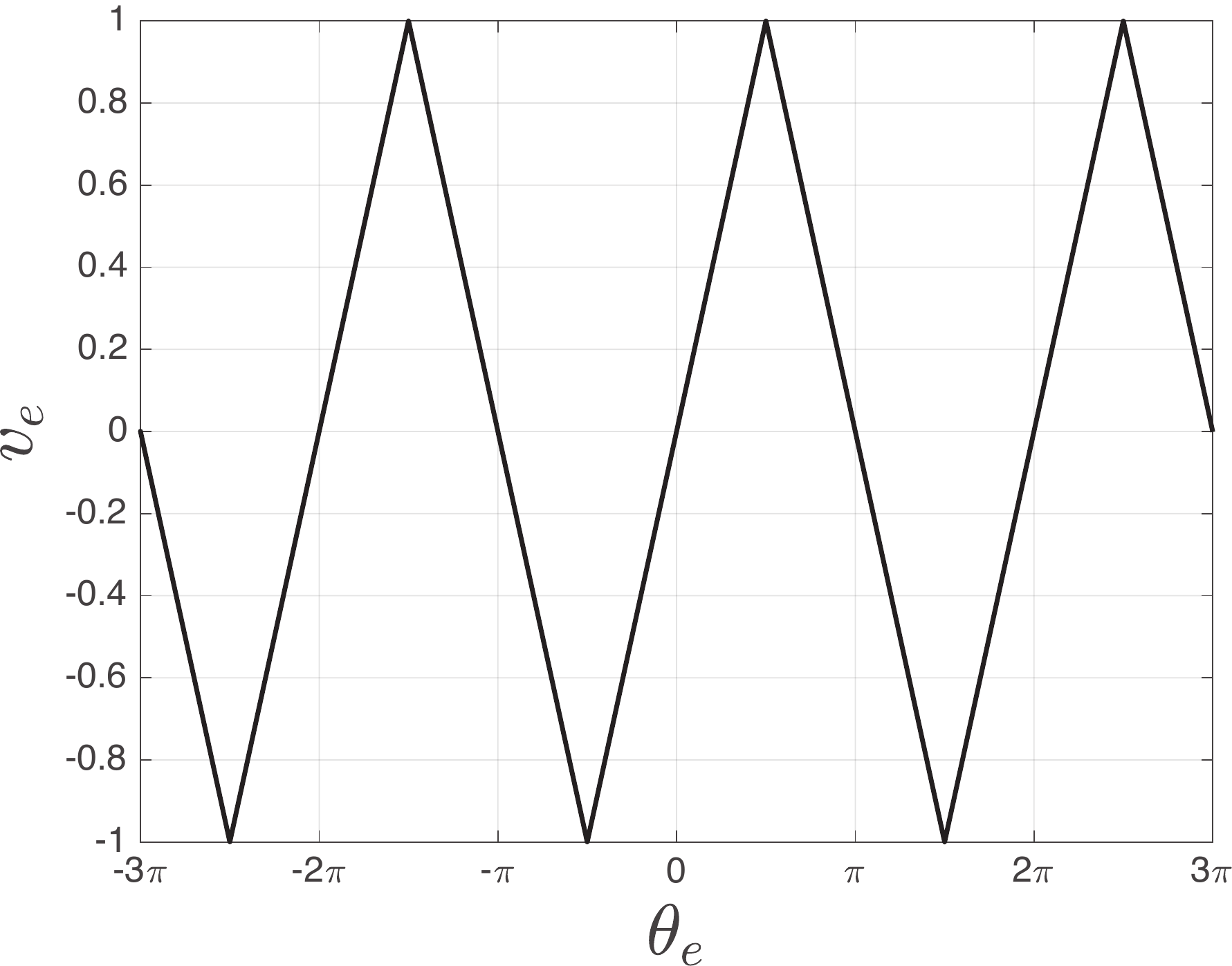}
	\caption{Triangular PD characteristic (piecewise-linear PD characteristic \eqref{eq:piecewise-linear PD characteristic} with $k = \frac{2}{\pi}$).}
	\label{fig:triangular PD characteristic}
\end{figure}

The state of the loop filter is represented by 
$x(t)\in\mathbb{R}$
and the transfer function is
\begin{equation*}
\begin{aligned}
& F(s) = \frac{1+s\tau_2}{s\tau_1},\quad
\tau_1>0,\ \tau_2>0.
\end{aligned}
\end{equation*}
The output of the loop filter $v_{\rm F}(t) = \frac{1}{\tau_1}(x(t) + \tau_2 v_e(\theta_e(t))$ is used to control the VCO frequency $\omega_{\rm vco}(t)$, which is proportional to the control voltage:
\begin{equation*}
	\begin{aligned}
		& \omega_{\rm vco}(t) = \dot\theta_{\rm vco}(t) = \omega_{\rm vco}^{\rm free} + K_{\rm vco} v_{\rm F}(t)
	\end{aligned}
\end{equation*}
where $K_{\rm vco}>0$ is a gain and $\omega_{\rm vco}^{\rm free}$ is a free-running frequency of the VCO. 

The behavior of PLL baseband model in the state space is described by a second-order nonlinear ODE:
\begin{equation}\label{eq:PLL-model}
\begin{aligned}
&\dot x = v_e(\theta_e), \\
&\dot \theta_e = \omega_e^{\rm free} - \frac{K_{\rm vco}}{\tau_1}\Big(x + \tau_2 v_e(\theta_e)\Big)
\end{aligned}
\end{equation}
where $\omega_e^{\rm free} = \omega_{\rm ref} - \omega_{\rm vco}^{\rm free}$ is a frequency error and $v_e(\theta_e)$ is defined in \eqref{eq:piecewise-linear PD characteristic}.
It is usually supposed that the reference frequency (hence, $\omega_e^{\rm free}$ too) can be abruptly changed and that the synchronization occurs between those changes.
Thus, existence of locked states, acquisition and transient processes after the reference frequency change are of interest.

\subsection{Local stability analysis}\label{sec:local analysis}

The PLL baseband model in Fig.~\ref{fig:PLL-model} is locked if the phase error $\theta_e(t)$ is constant. 
For the locked states of practically used PLLs, the loop filter state is constant
too and, thus, the locked states of model in Fig.~\ref{fig:PLL-model} correspond to the equilibria of model \eqref{eq:PLL-model} \cite{KuznetsovLYY-2015-IFAC-Ranges}.

\begin{definition}\cite{KuznetsovLYY-2015-IFAC-Ranges, LeonovKYY-2015-TCAS, BestKLYY-2016}
	A \emph{hold-in range} is the largest symmetric interval of frequency errors $|\omega_e^{\rm free}|$ such that an asymptotically stable equilibrium exists and varies continuously while $\omega_e^{\rm free}$ varies continuously within the interval.
	\label{def:hold-in}
\end{definition}

Observe that system \eqref{eq:PLL-model} is $2\pi$-periodic in $\theta_e$ and has an infinite number of equilibria $\left(\frac{\tau_1\omega_e^{\rm free}}{K_{\rm vco}},\ \pi m\right)$, $m\in\mathbb{Z}$.
The characteristic polynomial of system \eqref{eq:PLL-model} linearized at stationary states $\left(\frac{\tau_1\omega_e^{\rm free}}{K_{\rm vco}}, \pi m\right)$ is
\begin{equation*}
\begin{aligned}
&\chi(s) = s^2 + \frac{K_{\rm vco}\tau_2}{\tau_1}v_e^\prime(\pi m)s + \frac{K_{\rm vco}}{\tau_1}v_e^\prime(\pi m).
\end{aligned}
\end{equation*}
The nonlinearity $v_e(\theta_e)$ decreases $\left(v_e^\prime(\pi + 2\pi m) = -\frac{1}{\pi - \frac{1}{k}} < 0\right)$ for $\frac{1}{k} + 2 \pi m < \theta_e(t) < -\frac{1}{k} + 2 \pi (m + 1)$, and equilibria $\left(\frac{\tau_1\omega_e^{\rm free}}{K_{\rm vco}}, \pi + 2\pi m\right)$ are saddles.
The nonlinearity $v_e(\theta_e)$ increases ($v_e^\prime(2\pi m) = k > 0$) for $-\frac{1}{k} + 2 \pi m < \theta_e(t) < \frac{1}{k} + 2 \pi m$, and the equilibria $\left(\frac{\tau_1\omega_e^{\rm free}}{K_{\rm vco}}, 2\pi m\right)$ are asymptotically stable: 
\begin{itemlist}
	\item if $\frac{K_{\rm vco}\tau_2^2 k}{\tau_1} > 4$ then the equilibria $\left(\frac{\tau_1\omega_e^{\rm free}}{K_{\rm vco}}, 2\pi m\right)$ are asymptotically stable nodes, 
	\item if $\frac{K_{\rm vco}\tau_2^2 k}{\tau_1} = 4$ then the equilibria $\left(\frac{\tau_1\omega_e^{\rm free}}{K_{\rm vco}}, 2\pi m\right)$ are asymptotically stable degenerate nodes,
	\item if $\frac{K_{\rm vco}\tau_2^2 k}{\tau_1} < 4$ then the equilibria $\left(\frac{\tau_1\omega_e^{\rm free}}{K_{\rm vco}}, 2\pi m\right)$ are asymptotically stable focuses.
\end{itemlist}
Since an asymptotically stable equilibrium exists for any frequency error $\omega_e^{\rm free}$, the hold-in range of model \eqref{eq:PLL-model} is infinite for any loop parameters $K_{\rm vco}>0,\ \tau_1 > 0,\ \tau_2 > 0$.

\subsection{Global stability analysis}
\begin{definition}\cite{KuznetsovLYY-2015-IFAC-Ranges,LeonovKYY-2015-TCAS,BestKLYY-2016}
	A \emph{pull-in range} is the largest symmetric interval 
	of frequency errors $|\omega_e^{\rm free}|$ from the hold-in range such that an equilibrium is acquired for an arbitrary initial state.	
\end{definition}

In 1959, Andrew~J.~Viterbi applied the phase-plane analysis and stated that the second-order type 2 PLL models with sinusoidal PD characteristic have infinite (theoretically) hold-in and pull-in ranges for any loop parameters \cite[p.12]{Viterbi-1959}, \cite{Viterbi-1966}.
However, his proof was incomplete (see, e.g. discussion in \cite{AlexandrovKLNS-2015-IFAC}).
Later, Viterbi's statement was rigorously proved using the direct Lyapunov method ideas \cite{Bakaev-1963, AleksandrovKLNYY-2016-IFAC, KuznetsovLYY-2021-TCASII}.

To analyse the pull-in range of system \eqref{eq:PLL-model} with piecewise-linear PD characteristic, we apply the direct Lyapunov method and the corresponding theorem on global stability for the cylindrical phase space (see, e.g. \cite{LeonovK-2014-book, KuznetsovLYYKKRA-2020-ECC}). 
If there is a continuous function $V(x, \theta_e): \mathbb{R}^{n}\to\mathbb{R}$ such that

(i) $V(x, \theta_e + 2\pi)=V(x, \theta_e) \quad\forall x\in\mathbb{R}^{n-1}, \forall \theta_e\in\mathbb{R}$;

(ii) for any solution $(x(t), \theta_e(t))$ of system \eqref{eq:PLL-model} the function $V(x(t), \theta_e(t))$ is nonincreasing;

(iii) if $V(x(t), \theta_e(t))\equiv V(x(0), \theta_e(0))$, then $(x(t), \theta_e(t))\equiv~(x(0), \theta_e(0))$;

(iv) $V(x, \theta_e)+\theta_e^2\to+\infty$ as $||x||+|\theta_e|\to+\infty$ \\ 
then  
any trajectory of system \eqref{eq:PLL-model} tends to an equilibrium
(for brevity, we shall call such systems \textit{globally stable}).

Consider the following Lyapunov function:
\begin{equation}\label{eq:Lyapunov function}
\begin{aligned}
&V(x, \theta_e)= \frac{K_{\rm vco}}{2\tau_1}\left(x-\frac{\tau_1\omega_e^{\rm free}}{K_{\rm vco}}\right)^2 +
\int\limits_0^{\theta_e} v_e(\sigma) d\sigma.
\end{aligned}
\end{equation}
Its derivative along the trajectories of system \eqref{eq:PLL-model} is
\begin{equation*}
\begin{aligned}
&\dot{V}(x, \theta_e)=-
\frac{K_{\rm vco}\tau_2}{\tau_1}v_e^2(\theta_e) < 0 \quad\forall \theta_e \ne\pi m,\ m\in\mathbb{Z}.
\end{aligned}
\end{equation*}
Since the derivative along any solution other than stationary states is not identically zero, system \eqref{eq:PLL-model} is globally stable for any $\omega_e^{\rm free}$ and, hence, the pull-in range is infinite.

In 1981, William~F.~Egan conjectured \cite[p.176]{Egan-1981-book} that a higher-order {\it type 2 PLL with an infinite hold-in range also has an infinite pull-in range}, and supported it with some third-order PLL implementations (see also \cite[p.161]{Egan-2007-book}).
However, this conjecture is not valid in general and corresponding counterexamples were recently provided in \cite{KuznetsovLYY-2021-TCASII}.

Notice that a similar conjecture on the pull-in range for the second-order type 1 PLLs is known as \textit{the Kapranov conjecture} \cite{Kapranov-1956}, where it is supposed that the global stability of the corresponding model is determined by the birth of self-excited oscillations only, not hidden ones \cite{LeonovK-2013-IJBC, ChenKLM-2017-IJBC}.
Discussions of counterexamples to the Kapranov conjecture can be found in \cite{KuznetsovLYY-2017-CNSNS, Kuznetsov-2020-TiSU}.

\section{The lock-in range of second-order type 2 analog PLL with piecewise-linear PD characteristic}\label{sec:triangular_lock-in}

Although a PLL model can be globally stable with infinite pull-in range, the acquisition process can take long time.
To decrease the synchronization time, a lock-in range concept is frequently exploited \cite{Gardner-2005-book, Kolumban-2005, Best-2007}.

\begin{definition}\cite{KuznetsovLYY-2015-IFAC-Ranges, LeonovKYY-2015-TCAS, BestKLYY-2016}
	A \emph{lock-in range} is the largest interval of frequency errors $|\omega_e^{\rm free}|$ from the pull-in range such that the PLL model being in an equilibrium, after any abrupt change of $\omega_e^{\rm free}$ within the interval acquires an equilibrium without cycle slipping ($\sup\limits_{t>0} |\theta_e(0) - \theta_e(t)| < 2\pi$).	
\end{definition}

\begin{remark}\label{remark:cycle slipping}
Sometimes the upper limit is considered in the cycle slipping definition instead of the supremum: $\limsup\limits_{t\to+\infty} |\theta_e(0)-\theta_e(t)|\ge2\pi$.
For any $\omega_e^{\rm free}$ the following inequality is valid: $\sup\limits_{t>0}|\theta_e(0)~-~\theta_e(t)|~\ge~ \limsup\limits_{t\to+\infty} |\theta_e(0) - \theta_e(t)|$. 
However, bifurcation values determining the lock-in range $[0, \omega_l)$ are the same for both definitions of cycle slipping (see Fig.~\ref{fig:suplimsup}).
\end{remark}
\begin{figure*}[h]
	\begin{minipage}[h]{\linewidth}
		\centering\includegraphics[width=0.45\linewidth]{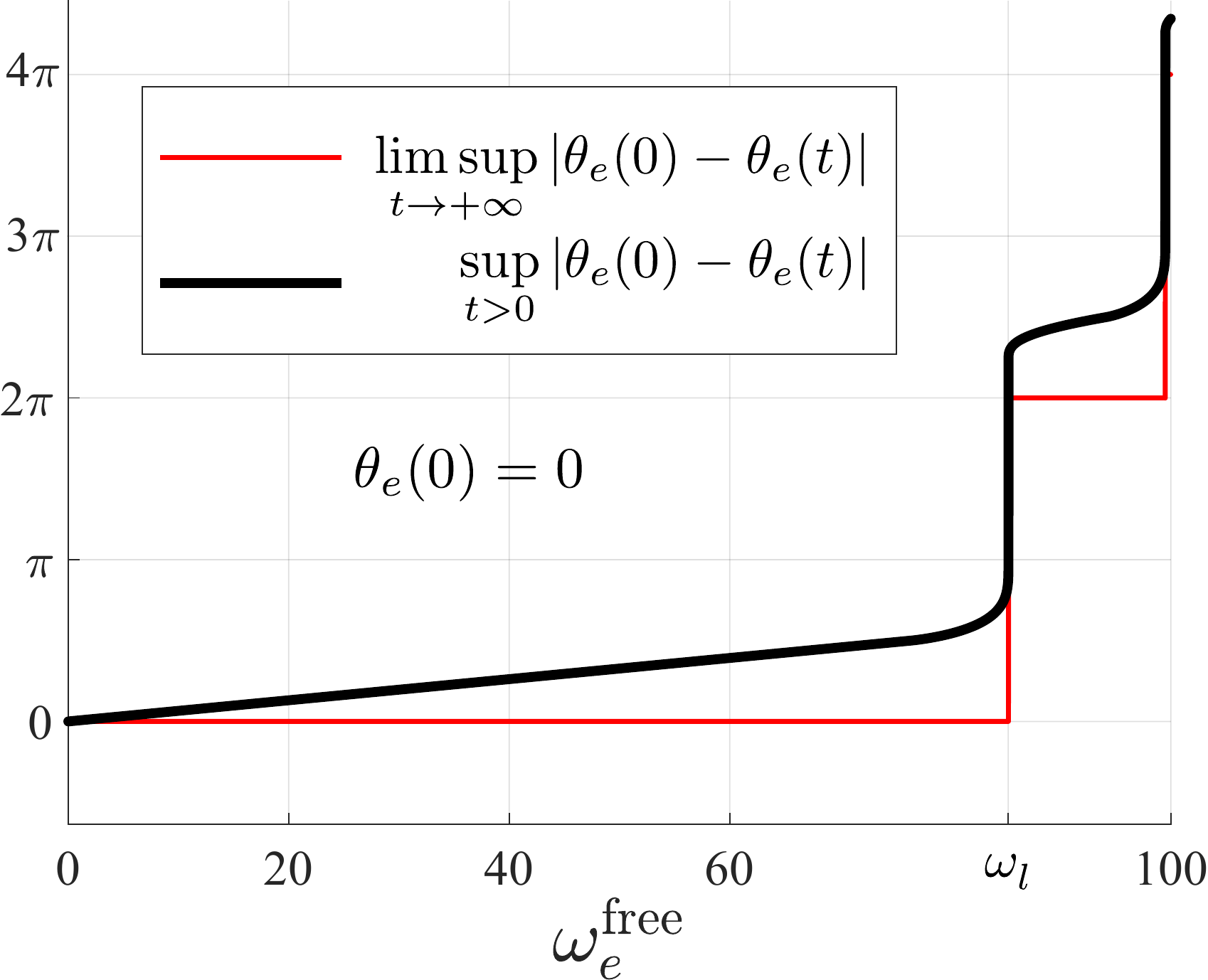}
	\end{minipage}
	\caption{Comparison of cycle slipping definitions (see Remark~\ref{remark:cycle slipping}) for model \eqref{eq:PLL-model} with parameters
		$\tau_1 = 0.0633$,
		$\tau_2 = 0.0225$,
		$K_{\rm vco} = 250$.
	}
	\label{fig:suplimsup}
\end{figure*}

From a mathematical point of view, system \eqref{eq:PLL-model} can initially be in an unstable equilibrium (at one of the saddles) or can acquire it by a separatrix after a change of $\omega_e^{\rm free}$ (see~\cite{KuznetsovBAYY-2019, KuznetsovLYYK-2020-IFACWC}). 
Corresponding behavior is not observed in practice: system state is disturbed by noise and can't remain in unstable equilibrium.
In this paper, two cycle-slipping-related characteristics of the system are considered: 
\textit{the lock-in range} $|\omega_e^{\rm free}|\in [0, \omega_l)$ where the equilibria are considered to be stable and 
\textit{the conservative lock-in range} $|\omega_e^{\rm free}|\in[0, \omega_l^c) \subset [0, \omega_l)$ which takes into account the unstable behavior described above.

For the considered model boundary values $\omega_l$ and $\omega_l^c$ are determined by \textit{cycle slipping bifurcation}. It happens when the system being in an equilibrium state is exposed to an abrupt change of $\omega_e^{\rm free}$, and the corresponding trajectory of the system after the switch tends to the nearest unstable equilibrium by the corresponding saddle separatrix.
In other words, $\sup\limits_{t>0} |\theta_e(0) - \theta_e(t)| = \limsup\limits_{t \to +\infty} |\theta_e(0) - \theta_e(t)| = \pi$ for $\theta_e(0) = 2\pi$ (see Fig.~\ref{fig:lock-in illustration}, lower left picture) and $\sup\limits_{t>0} |\theta_e(0) - \theta_e(t)| = \limsup\limits_{t\to+\infty} |\theta_e(0) - \theta_e(t)| = 2\pi$ for $\theta_e(0) = 3\pi$ (see Fig.~\ref{fig:lock-in illustration}, upper right picture).
For a larger $\omega_e^{\rm free}$ supremum 
$\sup\limits_{t>0} |\theta_e(0) - \theta_e(t)| > 2\pi$ and cycle slipping occurs.
Since the lock-in range is defined as a half-open interval, boundary values $\omega_e^{\rm free} = \omega_l$ and $\omega_e^{\rm free} = \omega_l^c$ are not included in it.

\begin{figure*}[h]
	\centering
	\begin{minipage}[h]{0.4\linewidth}
		\includegraphics[width=\linewidth]{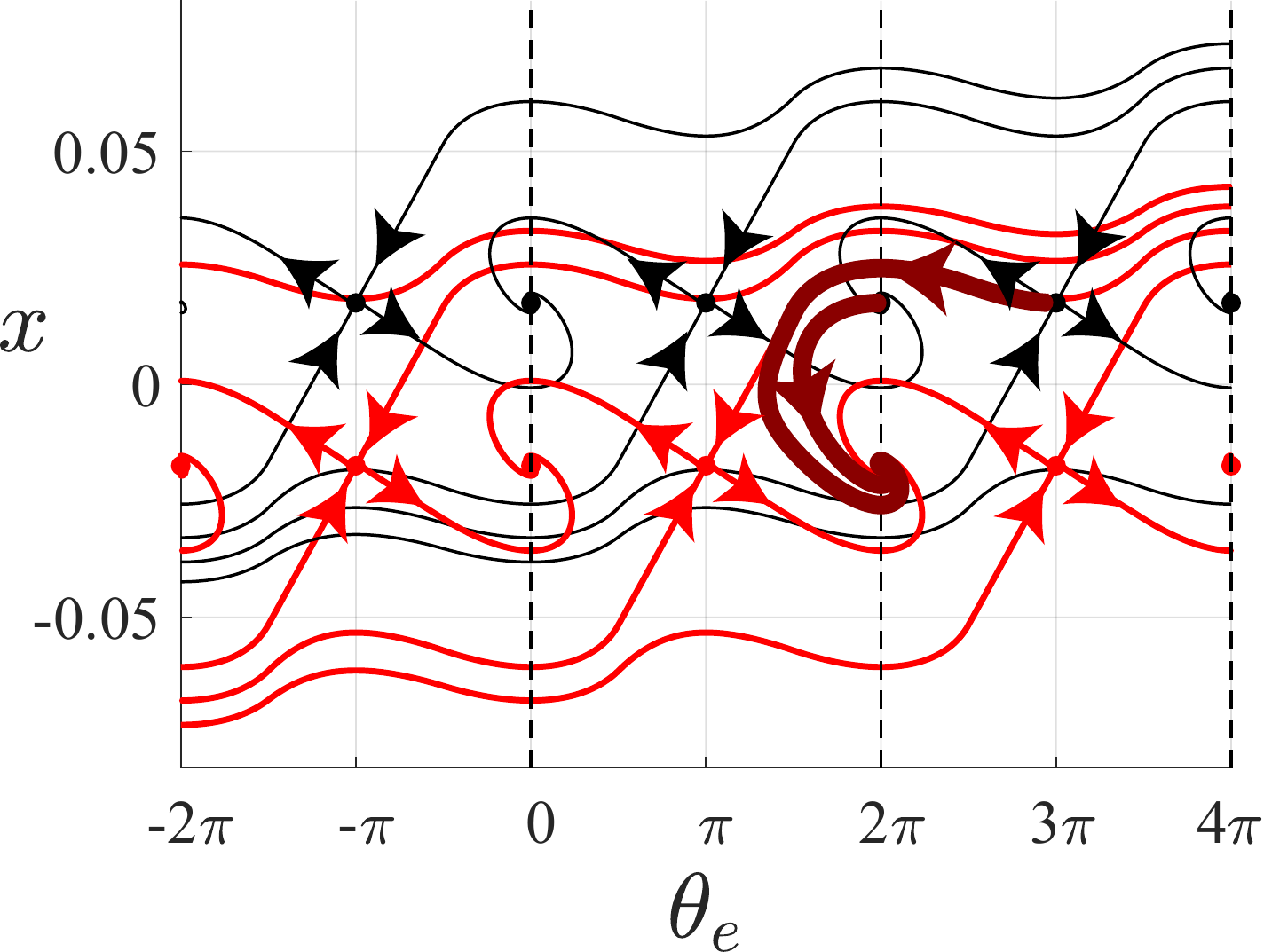}         	
	\end{minipage}
	$\qquad$
	\begin{minipage}[h]{0.4\linewidth}
		\includegraphics[width=\linewidth]{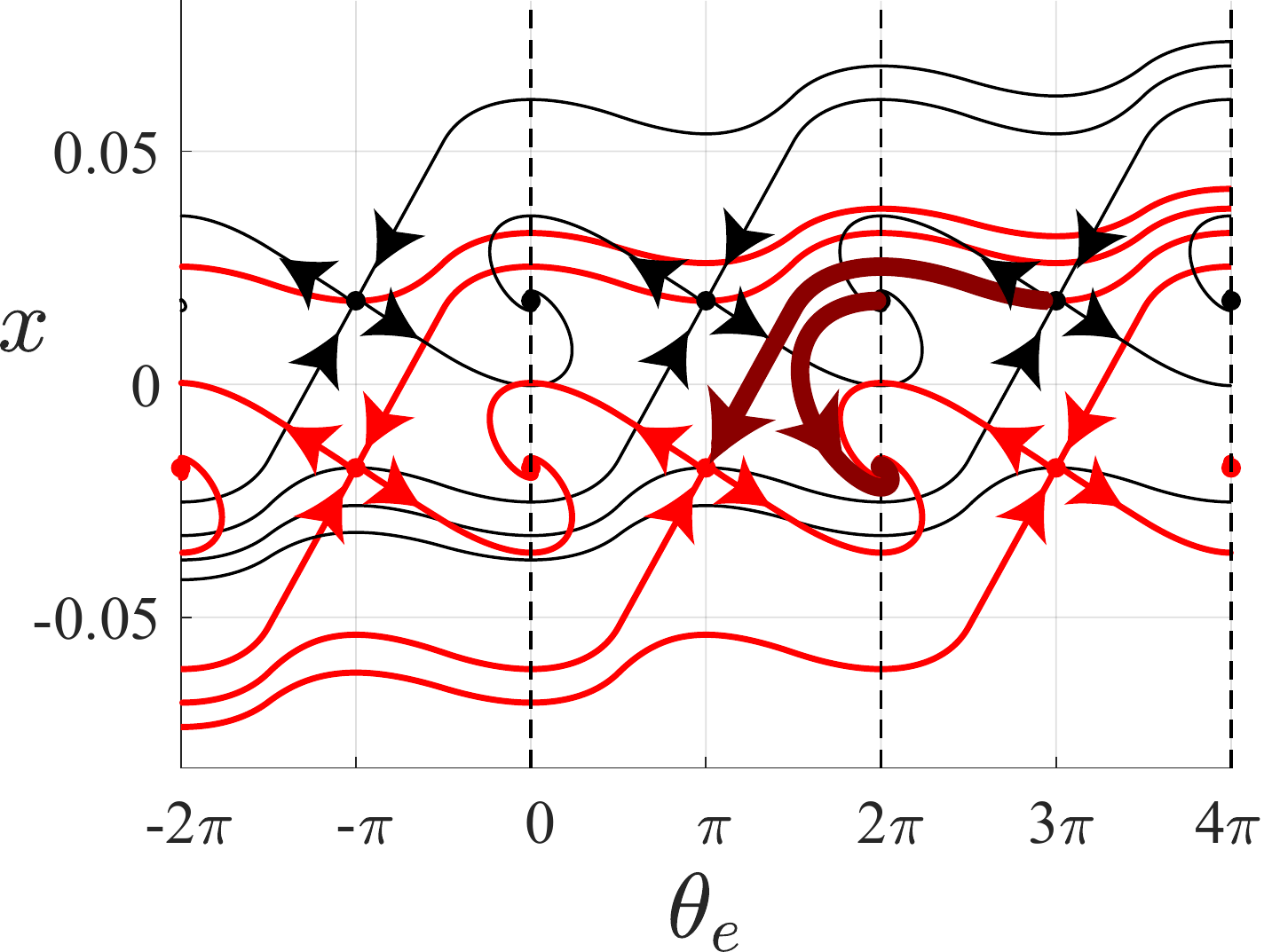}
	\end{minipage}
	
	\begin{minipage}[h]{0.4\linewidth}
		\includegraphics[width=\linewidth]{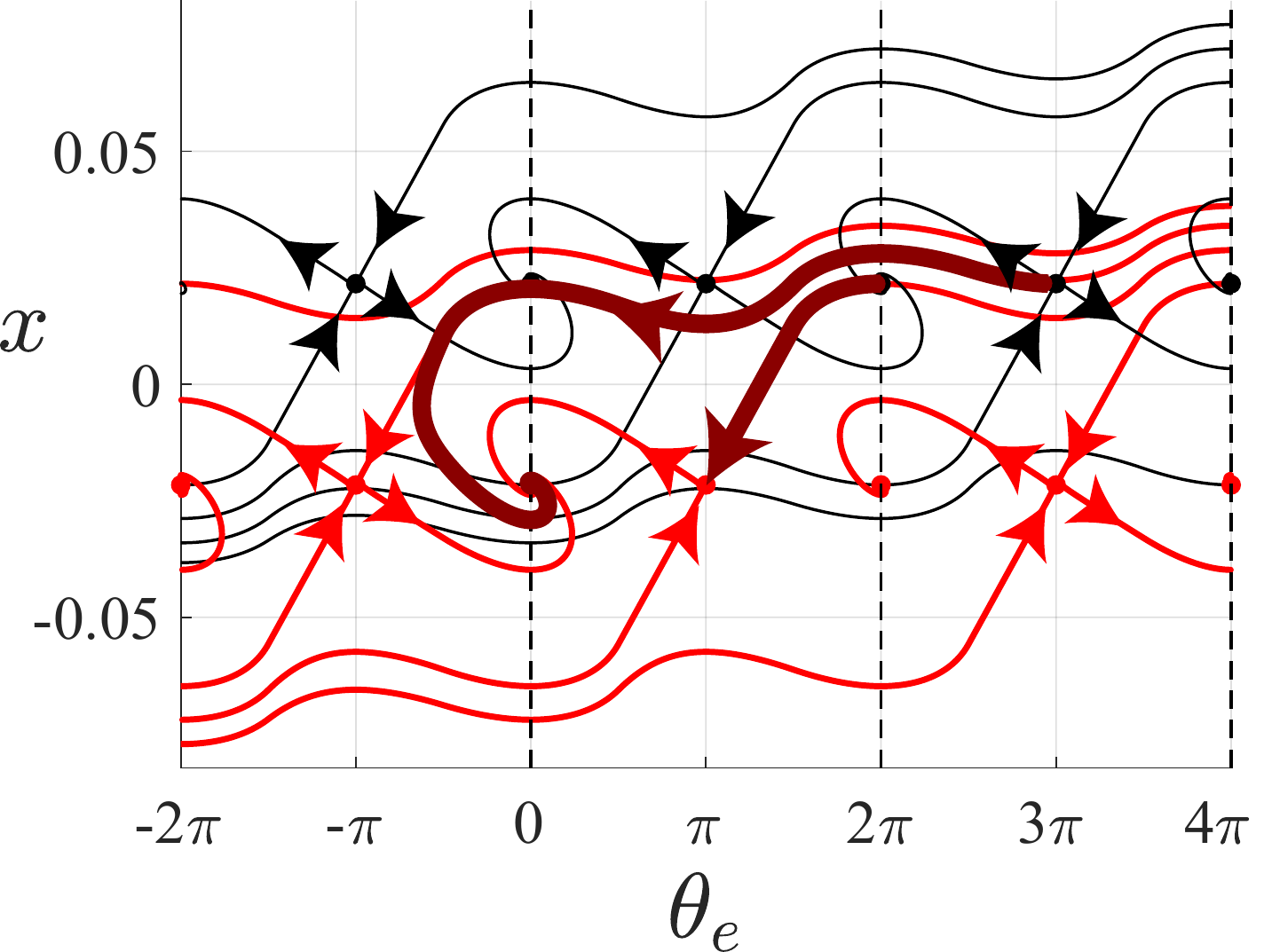}       	
	\end{minipage}
	$\qquad$
	\begin{minipage}[h]{0.4\linewidth}
		\includegraphics[width=\linewidth]{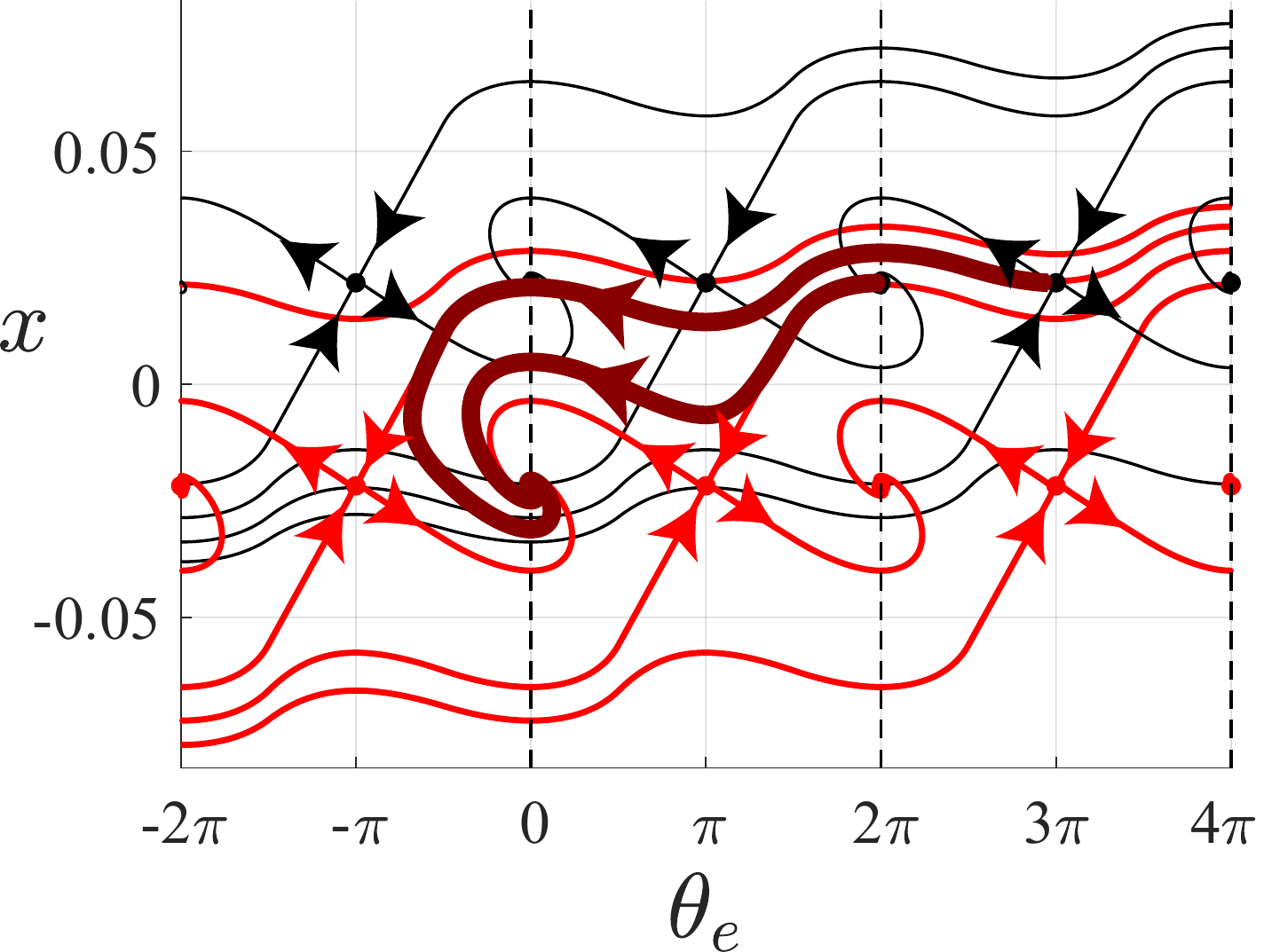}
	\end{minipage}
	\caption{Phase portraits for model \eqref{eq:PLL-model} with the following parameters:
		$F(s)= \frac{1+s\tau_2}{s\tau_1}$,
		$\tau_1 = 0.0633$,
		$\tau_2 = 0.0225$,
		$K_{\rm vco}=250$.
		Black dots are equilibria of the model with positive $\omega_e^{\rm free}=|\omega|$.
		Red color is for the model with negative $\omega_e^{\rm free}=-|\omega|$.
		Separatrices pass in and out of the saddles equilibria.
		Upper left subfigure: $\omega = 69 < \omega_l^c$,
		upper right subfigure: $\omega = \omega_l^c \approx 70.79$ (evaluated by Theorem~2),
		lower left subfigure: $\omega = \omega_l \approx 85.27$ (evaluated by Theorem~1),
		lower right subfigure: $\omega = 86 > \omega_l$.
	}
	\label{fig:lock-in illustration}
\end{figure*}

In practice, the lock-in range can be estimated in the following way.
Without loss of generality we can fix $\omega_{\rm vco}^{\rm free}$ and vary $\omega_{\rm ref}$ only.
Let initially $\omega_e^{\rm free} = \omega_{\rm ref} - \omega_{\rm vco}^{\rm free} = 0$ and the system is in a stable equilibrium.
Then we abruptly increase the reference frequency by sufficiently small frequency step $\Delta\omega>0$ (i.e., the reference frequency becomes $\omega_{\rm ref} =  \omega_{\rm vco}^{\rm free} + \Delta\omega$) and observe whether corresponding transient process converges to a locked state without cycle slipping (see Fig.~\ref{fig:lock-in-procedure}). 
After that we abruptly decrease the reference frequency by $2\Delta\omega$ (i.e., the reference frequency becomes $\omega_{\rm ref} =  \omega_{\rm vco}^{\rm free} - \Delta\omega$). 
If the transient process converges to the locked state without cycle slipping, then
$[0, \Delta\omega) \subset [0, \omega_l)$.
Frequency step $\Delta\omega>0$ should be increased until cycle slipping occurs.

\begin{figure}[!h]
	\centering
	\includegraphics[width=0.6\textwidth]{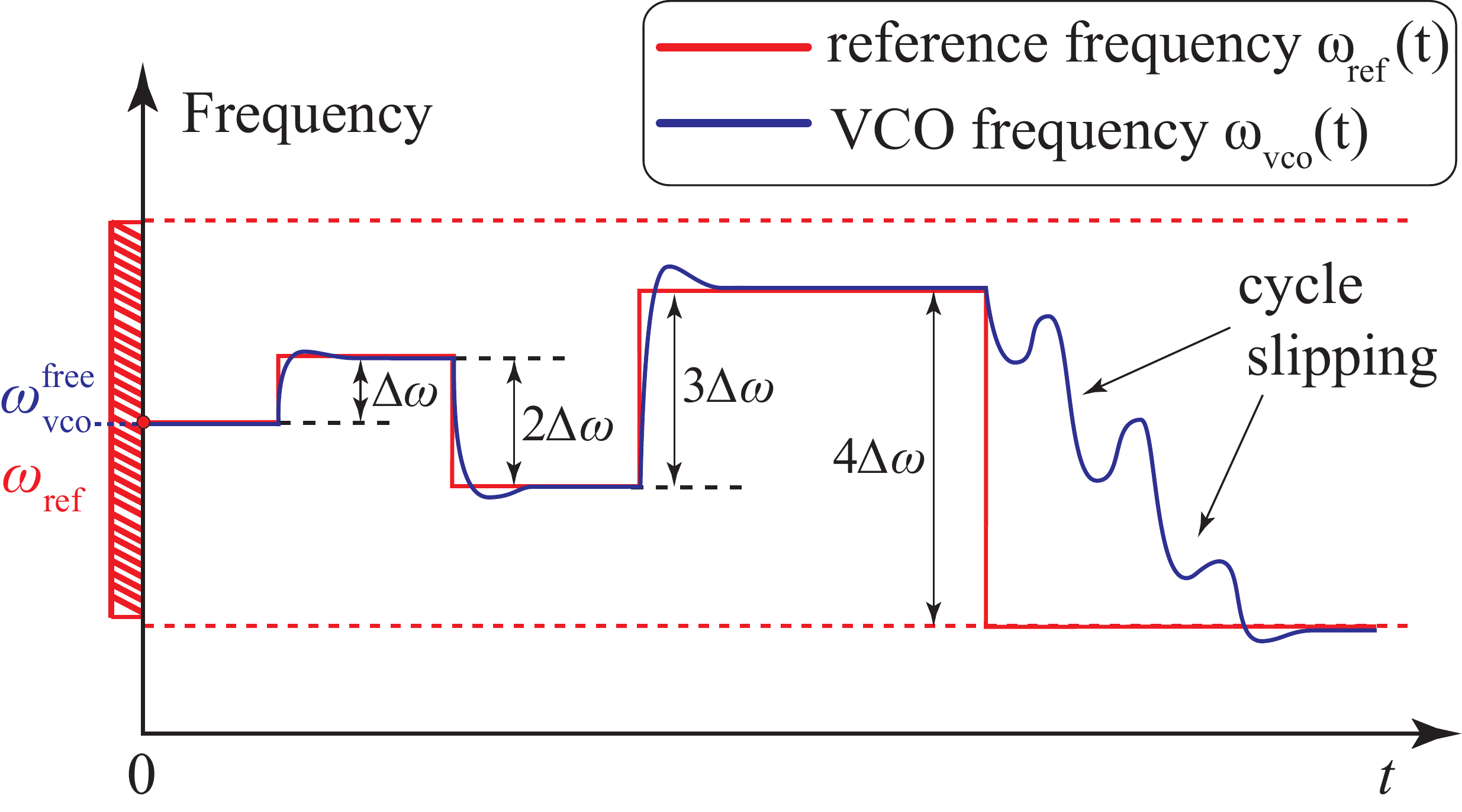}
	\caption{The lock-in range calculation.}
	\label{fig:lock-in procedure}
\end{figure}

Using changes of variables we represent system \eqref{eq:PLL-model} as the first-order differential equation \cite{Belyustina-1959, HuqueS-2011-exact} and following \cite{AleksandrovKLNYY-2016-IFAC, KuznetsovBAYY-2019} we formulate and prove theorems providing exact values for the lock-in range and for the conservative lock-in range.

\begin{theorem}\label{theorem: lock-in stable}
	The lock-in frequency of model \eqref{eq:PLL-model} with the piecewise-linear PD characteristic \eqref{eq:piecewise-linear PD characteristic} is
	\begin{equation}\label{eq:lock-in_exact_formula}
	\begin{aligned}
	&\omega_l
	=
	\begin{cases}
	& \frac{a\sqrt{\pi}}{2\tau_2}\Big(\frac{c + b}{c - b}\Big)^{\frac{a}{2b}},
	\quad a^2k>4,\\
	&\frac{a\sqrt{\pi}}{2\tau_2} 
	\exp(\frac{a}{2\sqrt{\pi}}),
	\quad a^2k = 4,\\
	& \frac{a\sqrt{\pi}}{2\tau_2}
	\exp\Big({\frac{a}{b}
		\arctan\frac{b}{c}}\Big),
	\quad a^2k < 4
	\end{cases}
	\end{aligned}
	\end{equation}
	where
	\begin{equation}\label{eq:abc}
	\begin{aligned}
	& a = \sqrt{\frac{K_{\rm {vco}}}{\tau_1}}\tau_2, \quad
	b = \sqrt{|a^2 - \frac{4}{k}|}, \quad
	c = \sqrt{a^2 + 4(\pi - \frac{1}{k})}.
	\end{aligned}
	\end{equation}
\end{theorem}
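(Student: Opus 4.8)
The plan is to convert the switching scenario into the tracing of one saddle separatrix of a single autonomous planar system. First I would set $y = x - \frac{\tau_1\omega_e^{\rm free}}{K_{\rm vco}}$, which moves the post-switch stable equilibrium to the origin and turns \eqref{eq:PLL-model} into the $\omega_e^{\rm free}$-independent system $\dot y = v_e(\theta_e)$, $\dot\theta_e = -\frac{K_{\rm vco}}{\tau_1}\bigl(y + \tau_2 v_e(\theta_e)\bigr)$. In these coordinates the stable node/focus is at $(\theta_e,y)=(2\pi,0)$ and the adjacent saddle at $(3\pi,0)$, while the pre-switch equilibrium (for $\omega_e^{\rm free}=0$ it is $x=0$, $\theta_e=2\pi$) becomes the initial point $\bigl(2\pi,\,-\frac{\tau_1\omega_e^{\rm free}}{K_{\rm vco}}\bigr)$. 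As discussed before the statement, the lock-in boundary is precisely the $\omega_e^{\rm free}$ for which this point lies on the incoming (stable) separatrix of the saddle at $(3\pi,0)$; hence $\omega_l = \frac{K_{\rm vco}}{\tau_1}\,|y_0^\ast|$, where $y_0^\ast$ is the ordinate at $\theta_e=2\pi$ of that separatrix continued back across the switching line $\theta_e = 2\pi+\frac1k$.

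Since $v_e$ is piecewise linear, the strip $A=\{2\pi-\frac1k\le\theta_e<2\pi+\frac1k\}$ carries $v_e=k(\theta_e-2\pi)$ with the node/degenerate-node/focus linearisation governed by the sign of $a^2k-4$ (this is exactly the trichotomy of Sec.~\ref{sec:local analysis}), and the strip $B=\{2\pi+\frac1k\le\theta_e<4\pi-\frac1k\}$ carries $v_e=-\frac{1}{\pi-1/k}(\theta_e-3\pi)$ with a saddle linearisation. In $B$ the stable manifold is the straight eigenline of the negative eigenvalue $\lambda_-=\frac{\sqrt{K_{\rm vco}/\tau_1}}{2(\pi-1/k)}(a-c)$; evaluating this line at $\theta_e=2\pi+\frac1k$ and using $\frac{1}{\pi-1/k}\bigl(\pi-\frac1k\bigr)=1$ gives the clean crossing value $y_1=\frac{1}{\lambda_-}$, which is the initial datum for the passage through $A$.

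The heart of the proof is the integration across strip $A$. Taking $\theta_e$ as the independent variable and using the homogeneous substitution $p=y/(\theta_e-2\pi)$ (this is the reduction to a first-order equation advertised before the theorem), the trajectory equation separates, and is equivalent to the constancy along the flow of $\tfrac12\ln\!\bigl(\beta y^2+\beta\tau_2 k\,y(\theta_e-2\pi)+k(\theta_e-2\pi)^2\bigr)+\tfrac{\beta\tau_2 k}{2}\!\int\!\frac{dp}{\beta p^2+\beta\tau_2 k\,p+k}$ with $\beta=K_{\rm vco}/\tau_1$. The quadratic form collapses at the left end to the constant $\pi$ — the identity $\beta y_1^2+\beta\tau_2 y_1+\frac1k=\frac{c^2-a^2}{4}+\frac1k=\pi$ (using $c^2-a^2=4(\pi-\frac1k)$) — and this is exactly what produces the factor $\sqrt\pi$ in \eqref{eq:lock-in_exact_formula}. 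The remaining exponential of $\int dp/(\beta p^2+\beta\tau_2 k p+k)$ splits into three regimes because the discriminant of that quadratic has the sign of $a^2k-4$: real roots ($a^2k>4$) give a logarithm, hence the power $\bigl(\frac{c+b}{c-b}\bigr)^{a/2b}$; a double root ($a^2k=4$, where $c=2\sqrt\pi$, $b=0$) gives $\exp\!\bigl(\frac{a}{2\sqrt\pi}\bigr)$; complex roots ($a^2k<4$) give $\exp\!\bigl(\frac{a}{b}\arctan\frac{b}{c}\bigr)$. Solving the constancy relation for $|y_0^\ast|$ and multiplying by $\beta$ then assembles \eqref{eq:lock-in_exact_formula}; as a check, the node and focus forms both tend to the degenerate one as $b\to0$.

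The step I expect to be the main obstacle is twofold. The first part is the bookkeeping in $\int dp/(\beta p^2+\beta\tau_2 k p+k)$ evaluated between $p=p_1$ and $p\to\infty$: turning the raw antiderivatives into the closed forms $\bigl(\frac{c+b}{c-b}\bigr)^{a/2b}$ and $\exp(\frac{a}{b}\arctan\frac{b}{c})$ and recovering the residual constant $\tfrac12$ in the prefactor $\frac{a\sqrt\pi}{2\tau_2}$ is where the algebra is heaviest. The second part is the qualitative justification that the traced separatrix meets $\theta_e=2\pi$ exactly once with $\dot\theta_e>0$ along the whole arc from the saddle, so that $y_0^\ast$ is well defined and marks the first cycle slip; this is immediate for the node and degenerate cases but needs a short monotonicity argument in the focus case, where the in-spiralling toward $(2\pi,0)$ could a priori make $\theta_e$ non-monotone before the arc reaches the switching line.
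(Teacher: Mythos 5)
Your strategy is essentially the paper's: trace the saddle's stable separatrix backward across the piecewise-linear strips, reduce each strip to a first-order homogeneous ODE via $p=y/(\theta_e-2\pi)$, and split into three cases according to the sign of the discriminant of $\beta p^2+\beta\tau_2 k\,p+k$, i.e.\ of $a^2k-4$. Your eigenline computation in the saddle strip and the identity $\frac{c^2-a^2}{4}+\frac{1}{k}=\pi$ that produces the factor $\sqrt{\pi}$ are correct and coincide with the paper's matching constant $C_{(0,\frac{1}{k})}=\frac{1}{2}\ln\bigl(\pi(\tfrac{c+b}{c-b})^{a/b}\bigr)$; the only presentational difference is that the paper works in the rescaled velocity plane $y=\frac{d\theta_e}{d\tau}$ after $\tau=\sqrt{K_{\rm vco}/\tau_1}\,t$, while you work in the shifted filter-state plane, which is cosmetic.

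The genuine gap is in the reduction to the separatrix condition, and it costs a factor of $2$. You place the pre-switch equilibrium at $\omega_e^{\rm free}=0$ and jump to $\omega_e^{\rm free}=\omega$, which yields $\omega_l=\frac{K_{\rm vco}}{\tau_1}|y_0^\ast|$. But the lock-in range is a symmetric interval such that \emph{any} change of $\omega_e^{\rm free}$ within it avoids cycle slipping, so the binding scenario is the jump from $-\omega_l$ to $+\omega_l$: in your shifted coordinates the initial point is $y=-\frac{2\tau_1\omega_l}{K_{\rm vco}}$, and the correct relation is $\omega_l=\frac{K_{\rm vco}}{2\tau_1}|y_0^\ast|$ (the paper's $-\frac{\tau_1\omega_l}{K_{\rm vco}}=Q(0,\omega_l)$, whence $\omega_l=\frac{a}{2\tau_2}S(0)$). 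What your criterion determines is the pull-out frequency $\omega_{\rm po}=2\omega_l$. Consequently the ``residual constant $\frac{1}{2}$'' you expect to recover from the algebra of $\int dp/(\beta p^2+\beta\tau_2 k\,p+k)$ is not there: carried through, your plan gives $\frac{a\sqrt{\pi}}{\tau_2}(\cdots)$, twice \eqref{eq:lock-in_exact_formula}; the missing $\frac{1}{2}$ is definitional, not computational. Your monotonicity worry in the focus case is legitimate but easily closed: at any point of the strip with $\dot\theta_e=0$ and $2\pi<\theta_e<2\pi+\frac{1}{k}$ one has $\ddot\theta_e=-\frac{K_{\rm vco}}{\tau_1}v_e(\theta_e)<0$, so $\dot\theta_e$ strictly increases in backward time there and the backward-continued separatrix, which enters the strip with $\dot\theta_e>0$, cannot leave the half-plane $\{\dot\theta_e>0\}$ before reaching $\theta_e=2\pi$; hence $y_0^\ast$ is well defined in all three cases.
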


\begin{theorem}\label{theorem: lock-in unstable}
	The conservative lock-in frequency of model \eqref{eq:PLL-model} with piecewise-linear PD characteristic \eqref{eq:piecewise-linear PD characteristic} is
	\begin{equation}\label{eq:conservative_lock-in_exact_formula}
	\begin{aligned}
	&\omega_l^c
	=
	\ \frac{1}{2}
	\sqrt{\frac{K_{\rm vco}
			(d + \frac{c - a}{2})^{\frac{c - a}{c}}
			(d - \frac{c + a}{2})^{\frac{c + a}{c}}
		}{\tau_1}},
	\end{aligned}
	\end{equation}
	where $a,\ b$ and $c$ are evaluated by \eqref{eq:abc}, and $d$ is the unique solution of one of the equations:
	\begin{equation}\label{eq:equations-for-y23}
	\begin{aligned}
	& \begin{cases}
	&
	(d - \frac{a - b}{2})^{\frac{b - a}{b}}
	(d - \frac{a + b}{2})^{\frac{b + a}{b}}
	=
	\pi
	(\frac{c + b}{c - b})^{\frac{a}{b}},\
	d > \frac{a + b}{2},
	\quad a^2k>4,\\
	&d 
	= 
	\frac{a}{2}
	\Big(1 + \dfrac{1}{W(\frac{a}{2\sqrt{\pi}}\exp({-\frac{a}{2\sqrt{\pi}}}))}\Big),
	\quad a^2k=4,\\
	&
	\Big(
	d^2 - ad + \frac{1}{k}
	\Big)
	\exp
	\Big(
	\frac{2a}{b} \arctan\frac{b}{a - 2d}
	\Big)
	=
	\pi
	\exp
	\Big(\frac{2a}{b}
	\arctan\frac{b}{c}
	\Big),\
	d > \frac{a}{2},
	\quad a^2k<4.
	\end{cases}
	\end{aligned}
	\end{equation}
	Here $W(x)$ is the Lambert $W$ function.
\end{theorem}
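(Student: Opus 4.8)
The plan is to pass from the planar system \eqref{eq:PLL-model} (with $\omega_e^{\rm free}$ frozen) to a pendulum-like phase-plane picture in the variables $(\theta_e,y)$, $y=\dot\theta_e$. Differentiating $y=\omega_e^{\rm free}-\tfrac{K_{\rm vco}}{\tau_1}(x+\tau_2 v_e(\theta_e))$ along \eqref{eq:PLL-model} gives $\dot\theta_e=y$, $\dot y=-\tfrac{K_{\rm vco}}{\tau_1}v_e(\theta_e)-\tfrac{K_{\rm vco}\tau_2}{\tau_1}v_e'(\theta_e)\,y$, so the orbits obey $\tfrac{dy}{d\theta_e}=-\tfrac{K_{\rm vco}}{\tau_1}\tfrac{v_e(\theta_e)}{y}-\tfrac{K_{\rm vco}\tau_2}{\tau_1}v_e'(\theta_e)$. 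Since $v_e$ is piecewise linear, on each decreasing branch (centred at a saddle $\theta_e=\pi+2\pi m$) this is the orbit equation of a linear saddle, and on each increasing branch (centred at a stable equilibrium $\theta_e=2\pi m$) it is that of a linear node or focus; the trichotomy $a^2k\gtrless4$ of Sec.~\ref{sec:local analysis} decides which. On every branch the substitution $v=y/(\theta_e-\pi m)$ centred at the branch's equilibrium separates the equation, and equivalently each linear block carries an explicit first integral (a product of powers of the distances to the two eigenlines in the saddle/node case, a spiral invariant in the focus case), which is what I would use to propagate the orbit across a branch.

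First I would fix the critical orbit. Taking the idealised initial point at the saddle equilibrium $\theta_e(0)=3\pi$ and the worst-case abrupt change of the frequency error from $+\omega_e^{\rm free}$ to $-\omega_e^{\rm free}$, the equilibrium value of the loop-filter state carried through the switch forces $y(0)=-2\omega_e^{\rm free}$; by the reflection symmetry $(\theta_e,y)\mapsto(6\pi-\theta_e,-y)$ about the saddle this is equivalent to launching from $(3\pi,\,2\omega_e^{\rm free})$ and asking for the value of $\omega_e^{\rm free}$ at which the orbit just reaches the next saddle $(5\pi,0)$ along its incoming stable separatrix. This is the cycle slipping bifurcation defining $\omega_l^c$, and the orbit crosses three branches — decreasing around $3\pi$, increasing around $4\pi$, decreasing around $5\pi$ — matched at $\theta_e=4\pi\mp\tfrac1k$.

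I would then integrate branch by branch. On the decreasing branch around $5\pi$ the stable manifold of the linear saddle is its stable eigenline, which forces the velocity at the breakpoint $\theta_e=4\pi+\tfrac1k$ to the single value $y_1=\tfrac12\sqrt{\tfrac{K_{\rm vco}}{\tau_1}}\,(c-a)$, independent of the regime. Propagating $y_1$ backwards through the increasing branch by its first integral, and writing $d$ for the scaled velocity $y/\sqrt{K_{\rm vco}/\tau_1}$ at the opposite breakpoint $\theta_e=4\pi-\tfrac1k$, yields precisely the three cases of \eqref{eq:equations-for-y23}: the node and focus branches reproduce the power-law and the $\arctan$ forms, the repeated root at $a^2k=4$ produces the Lambert-$W$ form, and the constant $\pi$ on the right-hand sides comes out of evaluating that first integral at $y_1$ together with the defining relations \eqref{eq:abc}. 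Finally, equating the first integral of the saddle branch around $3\pi$ at the launch point $(3\pi,2\omega_e^{\rm free})$ and at the breakpoint $(4\pi-\tfrac1k,\sqrt{K_{\rm vco}/\tau_1}\,d)$ expresses $2\omega_e^{\rm free}$ through $d$; the eigenvalue ratios of this saddle supply the exponents $\tfrac{c\mp a}{c}$ and the factors $d\pm\tfrac{c\mp a}{2}$, and solving for $\omega_e^{\rm free}$ gives \eqref{eq:conservative_lock-in_exact_formula}, the overall factor $\tfrac12$ being exactly the doubled initial velocity.

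The main obstacle, I expect, is the increasing branch around the stable equilibrium $\theta_e=4\pi$. There the linearising substitution $v=y/(\theta_e-4\pi)$ is singular at the centre, so the orbit must be carried through the equilibrium with the globally defined first integral rather than with $v$, the three spectral cases must be treated separately (with the correct branch of $\arctan$ tracked in the focus case), and one must prove that \eqref{eq:equations-for-y23} has a \emph{unique} admissible root $d$ — a monotonicity argument for the branch integral viewed as a function of the entry velocity. Alongside this I would check that the constructed orbit keeps $y>0$ until the terminal saddle, so that it is genuinely monotone in $\theta_e$; that the root satisfies $d>\tfrac{c+a}{2}$, which is needed for \eqref{eq:conservative_lock-in_exact_formula} and follows because the orbit stays above the unstable eigenline of the saddle at $3\pi$; and, to identify $\omega_l^c$ as the bifurcation value, that the reached deviation is monotone in $\omega_e^{\rm free}$ so that $\sup_{t>0}|\theta_e(0)-\theta_e(t)|=2\pi$ holds exactly at this frequency.
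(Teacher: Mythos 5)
Your plan is essentially the paper's own proof: reduce to the first-order orbit equation in the $(\theta_e,\dot\theta_e)$ plane, integrate the critical orbit piecewise across the three linear branches (saddle--stable equilibrium--saddle) using the homogeneous substitution $z=y/(\theta_e-\pi m)$ and the resulting first integrals, with the terminal saddle's stable eigenline fixing the boundary value $\tfrac{c-a}{2}$, the three spectral cases producing the power-law, Lambert-$W$ and $\arctan$ equations for $d$, and the initial saddle's integral yielding \eqref{eq:conservative_lock-in_exact_formula}; your shift by $4\pi$ and reflection merely replace the paper's backward integration of the separatrix from $(\pi,0)$ to $(-\pi,y_l^c)$. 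The technical obstacles you flag (carrying the integral through the singular centre, tracking the $\arctan$ branch, uniqueness of $d$ by monotonicity, and $d>\tfrac{c+a}{2}$) are exactly the points the paper's appendix addresses.
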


\begin{proof}[Proof of Theorem~\ref{theorem: lock-in stable} and Theorem~\ref{theorem: lock-in unstable}]
	The proof given in Appendix~A is based on the fact that system \eqref{eq:PLL-model} is piecewise-linear and can be integrated analytically on the linear segments.
\end{proof}

	Notice that $\omega_l$ and $\omega_l^c$ are continuous functions
of variable $k$ (as $a$ is fixed): the cases $a^2k>4$ and $a^2k<4$ in formulae \eqref{eq:lock-in_exact_formula}, \eqref{eq:conservative_lock-in_exact_formula} approach the case $a^2k = 4$ as $k \to \frac{4}{a^2}$ (as $b\to 0$).

\section{Conclusions} \noindent 
In this work, the exact formulae for the lock-in range and the conservative lock-in range for the second-order type 2 PLL with a piecewise-linear phase detector characteristic were derived. 
In engineering literature, the following approximate estimate for the lock-in range can be found:
\begin{equation}\label{eq:Gardner_lock-in_estimate} 
	\begin{aligned}
		& \omega_l \approx \frac{ K_{\rm vco}\tau_2}{\tau_1}
	\end{aligned}
\end{equation} 
(see \cite[p.69]{Best-2007} where $\omega_l \approx \pi \zeta \omega_n$, $\omega_n = \sqrt{K_dK_{\rm vco}/\tau_1}$, $\zeta = \omega_n\tau_2/2$, $K_d = \frac{2}{\pi}$, 
and 
\cite[p.187]{Gardner-2005-book} where $K_d = 1, K_o = K_{\rm vco}$).
However, estimate \eqref{eq:Gardner_lock-in_estimate} intersects the exact lock-in frequency value \eqref{eq:lock-in_exact_formula} for some values of parameters.
Taking into account that for type 2 PLLs a pull-out frequency\footnote{
	In 1966, such concept as pull-out frequency was introduced by F.~Gardner \cite[p.37]{Gardner-1966}.
	In the literature, the following explanations of the pull-out frequency $\omega_{\rm po}$ can be found: 
	``some frequency-step limit below which the loop does not slip cycles but remains in lock'' \cite[p.37]{Gardner-1966}, \cite[p.116]{Gardner-2005-book},
	``the maximum value of the input reference frequency step that can be applied to a phase-locked PLL, yet the loop is able to relock without slipping a cycle'' \cite{Stensby-1997, HuqueS-2011-exact, HuqueS-2013} (see also \cite[p.59]{Best-2007}).
	Since using a linear change of variables the value $\omega_e^{\rm free}$ can be excluded from the type 2 PLL systems \cite{KuznetsovLYY-2021-TCASII}, such concept is consistent for them and corresponds to the lock-in frequency in the following way: $\omega_{\rm po} = 2\omega_l$.
	However, equilibria of type 1 PLLs depend on the frequency error $\omega_e^{\rm free}$ and, hence, the correct pull-out frequency definition should take into account the initial value of the frequency error corresponding to the locked state.
}
$\omega_{\rm po}$ is twice the value of the lock-in frequency, one more approximate estimate for the lock-in range is exploited:
\begin{equation*} 
	\begin{aligned}
		& \omega_l 
		\approx 
		0.7995\sqrt{\frac{2K_{\rm vco}}{\pi\tau_1}}
		+
		1.23\frac{\tau_2K_{\rm vco}}{\pi\tau_1}
	\end{aligned}
\end{equation*}
(see \cite[p.84]{Best-2007} where
$2\omega_l = \omega_{\rm po} 
\approx 
2.46\omega_n(\zeta + 0.65)$,
$\omega_n = \sqrt{K_dK_{\rm vco}/\tau_1},\ \zeta = \omega_n\tau_2/2,\ K_d = \frac{2}{\pi}$).

A.S.~Huque and J.~Stensby analysed system \eqref{eq:PLL-model} with a triangular PD characteristic [the piecewise-linear PD characteristic \eqref{eq:piecewise-linear PD characteristic} with $k = \frac{2}{\pi}$] in \cite{HuqueS-2011-exact, Huque-2011-PhD}.
However, in those works the global stability of system \eqref{eq:PLL-model} was not analysed.
In these works, the following formula for a pull-out frequency was derived:
\begin{equation}\label{eq:Stensby_formula triangular} 
\begin{aligned}
& \omega_{\rm po} 
= 
\frac{a^2}{\tau_2} \exp\Big(
\frac{1}{2} \ln |m_{-}^2 - m_{-} + a^\prime| - \frac{1}{\sqrt{4 a^\prime - 1}} \arctan \left(\frac{1 - 2m_{-}}{\sqrt{4 a^\prime - 1}}\right) + \frac{\pi}{2\sqrt{4 a^\prime - 1}} \Big)
\end{aligned}
\end{equation}
where 
$a^\prime = \frac{\pi}{2a^2},\ 
m_- = \frac{1}{2}(1 - \sqrt{4 a^\prime + 1})$.
For $a^2 < 2\pi$ the lock-in frequency $\omega_l = \frac{1}{2}\omega_{\rm po}$ with $\omega_{\rm po}$ from \eqref{eq:Stensby_formula triangular} coincides with the corresponding case in \eqref{eq:lock-in_exact_formula}, however for $a^2 \ge 2\pi$ formula \eqref{eq:Stensby_formula triangular} is formally not applicable and equations \eqref{eq:lock-in_exact_formula} should be used.

It's important to note that obtained lock-in range formula \eqref{eq:lock-in_exact_formula} is also a lower analytical estimate for the lock-in range of the second-order type 2 PLL with a sinusoidal PD characteristic.
For these systems several engineering estimates are known (see, e.g., \cite[p.117]{Gardner-2005-book} and \cite{HuqueS-2013} for the pull-out range estimates, and \cite[p.187]{Gardner-2005-book}, \cite[p.3748]{Kolumban-2005}, \cite[p.67]{Best-2007}, \cite{BestKLYY-2016}, \cite[p.18]{Best-2018} for the lock-in range estimates).

The further development of such systems analysis is connected with consideration of higher-order loop filters and discontinuous phase detector characteristics for revealing hidden oscillations and providing the global stability \cite{ZhuWEK-2020, KuznetsovMYY-2021-TCAS}.

\nonumsection{Acknowledgments} \noindent 
The work is funded by Team Finland Knowledge programme (163/83/2021) and by the Ministry of Science and Higher Education of the Russian Federation as part of World-class Research Center program: Advanced Digital Technologies (contract No. 075-15-2020-934 dated 17.11.2020). 
Z.~Wei acknowledges support from the National Natural Science Foundation of China (Nos. 12172340 and 11772306).

\appendix{Proof of Theorem~\ref{theorem: lock-in stable} and Theorem~\ref{theorem: lock-in unstable}}\label{sec:appendix proof}
\begin{proof}[Proof of Theorem~\ref{theorem: lock-in stable} and Theorem~\ref{theorem: lock-in unstable}]
	Let's find the lock-in range of model \eqref{eq:PLL-model} with piecewise-linear PD characteristic \eqref{eq:piecewise-linear PD characteristic}.
	As it was noted in section~\ref{sec:triangular_lock-in}, the lock-in frequency can be determined by such an abrupt change of $\omega_e^{\rm free}$ that the corresponding trajectory tends to the nearest unstable equilibrium (by the corresponding separatrix).
	Suppose that initially the frequency error was equal to $\omega_e^{\rm free} = -\omega < 0$, but then changed to $\omega_e^{\rm free} = \omega > 0$.
	Hence, initially the system is in equilibrium
	$x^{\rm eq} = -\frac{\tau_1\omega}{K_{\rm vco}},\quad
	\theta_e^{\rm eq} = 0$, but after the switch the corresponding trajectory tends to
	$x^{\rm eq} = \frac{\tau_1\omega}{K_{\rm vco}},\quad
	\theta_e^{\rm eq} = 0$ without cycle slipping if $\omega < \omega_l$.
	
	Such $\omega_l$ is determined by such frequency error $\omega_e^{\rm free}$ that a trajectory being in stable equilibrium (before the switch)
	$x^{\rm eq} = -\frac{\tau_1\omega_l}{K_{\rm vco}},\quad
	\theta_e^{\rm eq} = 0$
	tends to saddle equilibrium (after the switch)
	$x^{\rm eq} = \frac{\tau_1\omega_l}{K_{\rm vco}},\quad
	\theta_e^{\rm eq} = \pi$
	by the corresponding separatrix.	
	Thus, the lock-in frequency $\omega_l$ corresponds to the case
	\begin{equation}\label{eq:lock-in relations with Q}
	\begin{aligned}
	&	-\frac{\tau_1\omega_l}{K_{\rm vco}} = Q(0, \omega_l)
	\end{aligned}
	\end{equation}
	where $\frac{\tau_1\omega_e^{\rm free}}{K_{\rm vco}}$ is $x$-coordinate of equilibrium of model \eqref{eq:PLL-model} and $x = Q(\theta_e, \omega_e^{\rm free})$ is the lower separatrix of saddle equilibrium $(\frac{\tau_1\omega_e^{\rm free}}{K_{\rm vco}}, \pi)$ (see Fig.~\ref{fig:lock-in illustration}).
	
	After the change of variables
	$\tau = \sqrt{\frac{K_{\rm vco}}{\tau_1}}t$,  
	$y = \sqrt{\frac{\tau_1}{K_{\rm vco}}}\omega_e^{\rm free}  - \sqrt{\frac{K_{\rm vco}}{\tau_1}} (x + \tau_2v_e(\theta_e))$,
	for $\theta_e(t)\in (-\frac{1}{k} + 2 \pi n,\ \frac{1}{k} + 2 \pi n)$ and $ \theta_e(t)\in(\frac{1}{k} + 2 \pi n,\ -\frac{1}{k} + 2 \pi (n + 1))$
	system \eqref{eq:PLL-model} is represented as follows:
	\begin{equation}\label{eq:PLL-triangular-after_change_of_variables}
	\begin{aligned}
	&\dot y = - av_e^\prime(\theta_e)y - v_e(\theta_e),\\
	&\dot{\theta}_e = y,
	\end{aligned}
	\end{equation}
	where $a = \tau_2\sqrt{\frac{K_{\rm {vco}}}{\tau_1}}$.
	
	Upper separatrix $y = S(\theta_e)$ of the phase plane of \eqref{eq:PLL-triangular-after_change_of_variables} corresponds to separatrix $x = Q(\theta_e, \omega_e^{\rm free})$ from \eqref{eq:PLL-model} (see Fig.~\ref{fig:SeparatricesEquivalent}) and has the form
	\begin{equation*}
	\begin{aligned}
	&S(\theta_e) = \sqrt{\frac{\tau_1}{K_{\rm vco}}}\omega_e^{\rm free} - \sqrt{\frac{K_{\rm vco}}{\tau_1}}(Q(\theta_e, \omega_e^{\rm free}) + \tau_2v_e(\theta_e)).
	\end{aligned}
	\end{equation*}
	\begin{figure}[!h]
		\centering
		\includegraphics[width=\textwidth]{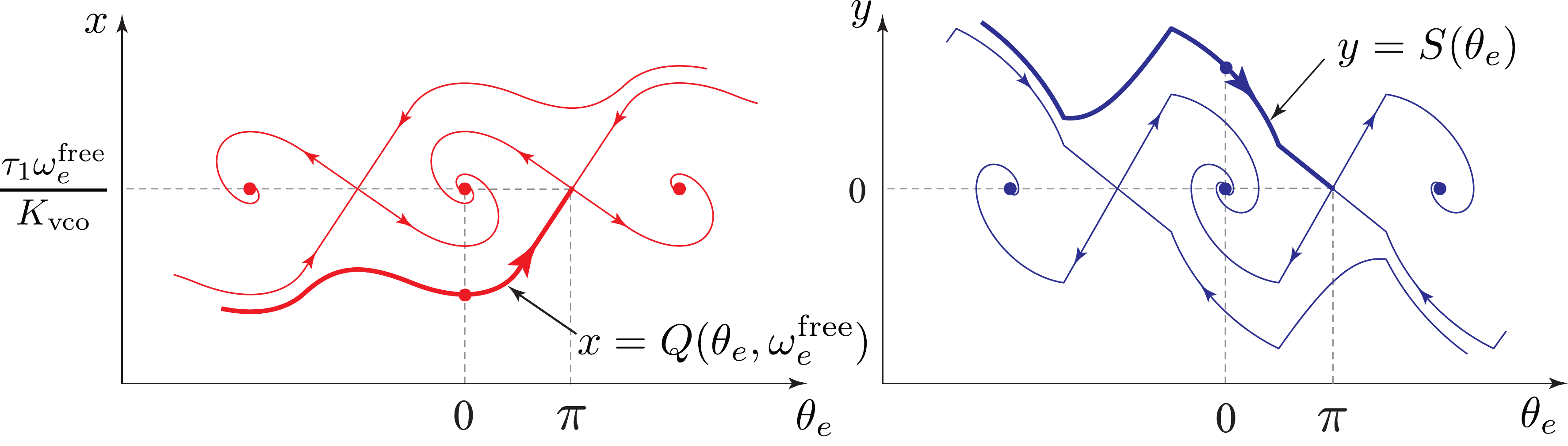}
		\caption{Phase plane portraits of \eqref{eq:PLL-model} and \eqref{eq:PLL-triangular-after_change_of_variables}.}
		\label{fig:SeparatricesEquivalent}
	\end{figure}
	Thus, relation \eqref{eq:lock-in relations with Q} takes the form
	\begin{equation*}
	\begin{aligned}
	&-\frac{\tau_1\omega_l}{K_{\rm vco}} = \frac{\tau_1\omega_l}{K_{\rm vco}} - \sqrt{\frac{\tau_1}{K_{\rm vco}}}S(0).
	\end{aligned}
	\end{equation*}
	Hence, $\omega_l = \frac{a}{2\tau_2}S(0)$.
	Analogously to the phase plane analysis for $\omega_l$, we get the following formula for the conservative lock-in frequency\footnote{
		To be more precise, for the conservative lock-in frequency it should be formally written $\omega_l^c = \min(\omega_l,\ \frac{a}{2\tau_2}S(-\pi))$, however, 
		$S(0) - S(-\pi) = -\sqrt{\frac{K_{\rm vco}}{\tau_1}} (Q(0, \omega_e^{\rm free}) - Q(-\pi, \omega_e^{\rm free})) > 0$ because $\dot x = v_e(\theta_e) < 0 $ as $\theta_e\in[-\pi,0]$.}:
	$\omega_l^c = \frac{a}{2\tau_2}S(-\pi)$.
	Denote
	\begin{equation*}
	\begin{aligned}
	&y_l = S(0),
	\qquad
	y_l^c = S(-\pi)
	\end{aligned}
	\end{equation*}
	and get the formulae for $\omega_l$ and $\omega_l^c$:
	\begin{equation}\label{eq:omega_l_formula-appendix}
	\begin{aligned}
	&\omega_l = \frac{a}{2\tau_2}y_l,
	\end{aligned}
	\end{equation}
	\begin{equation}\label{eq:omega_l^c_formula-appendix}
	\begin{aligned}
	&\omega_l^c = \frac{a}{2\tau_2}y_l^c.
	\end{aligned}
	\end{equation}
	
	The computation of $y_l$ and $y_l^c$ from formulae  \eqref{eq:omega_l_formula-appendix}, \eqref{eq:omega_l^c_formula-appendix} consists of the following stages. 
	Let's divide the phase plane to the following domains:
	\begin{itemize}
		\item I: $\{(y,\ \theta_e) \mid \frac{1}{k}\le \theta_e \le \pi$; $\theta_e,y \in \mathbb{R}\}$,
		\item II: $\{(y,\ \theta_e) \mid -\frac{1}{k}\le \theta_e \le \frac{1}{k}$; $\theta_e,y \in \mathbb{R}\}$,
		\item III: $\{(y,\ \theta_e) \mid -\pi \le \theta_e \le -\frac{1}{k}$; $\theta_e,y \in \mathbb{R}\}$.
	\end{itemize}
	In the open domains, system \eqref{eq:PLL-triangular-after_change_of_variables} is a linear one and can be integrated analytically.
	Firstly, we compute $S(\frac{1}{k})$, which is possible due to the continuity of \eqref{eq:PLL-model}.
	Using the obtained value as the initial data of the Cauchy problem and finding its solution in the domain II, we can compute $y_l = S(0)$ and $S(-\frac{1}{k})$. 
	Here exist three cases depending on the stable equilibrium type: an asymptotically stable focus, an asymptotically stable node, and an asymptotically stable degenerated node.
	For every case described above we perform separate computations.
	Using the obtained value as the initial data of the Cauchy problem and finding its solution in the domain III, we can compute $y_l^c = S(-\pi)$ (see Fig.~\ref{fig:separatrix integration}).
	
	\begin{figure}[h]
		\centering
		\includegraphics[width=0.7\linewidth]{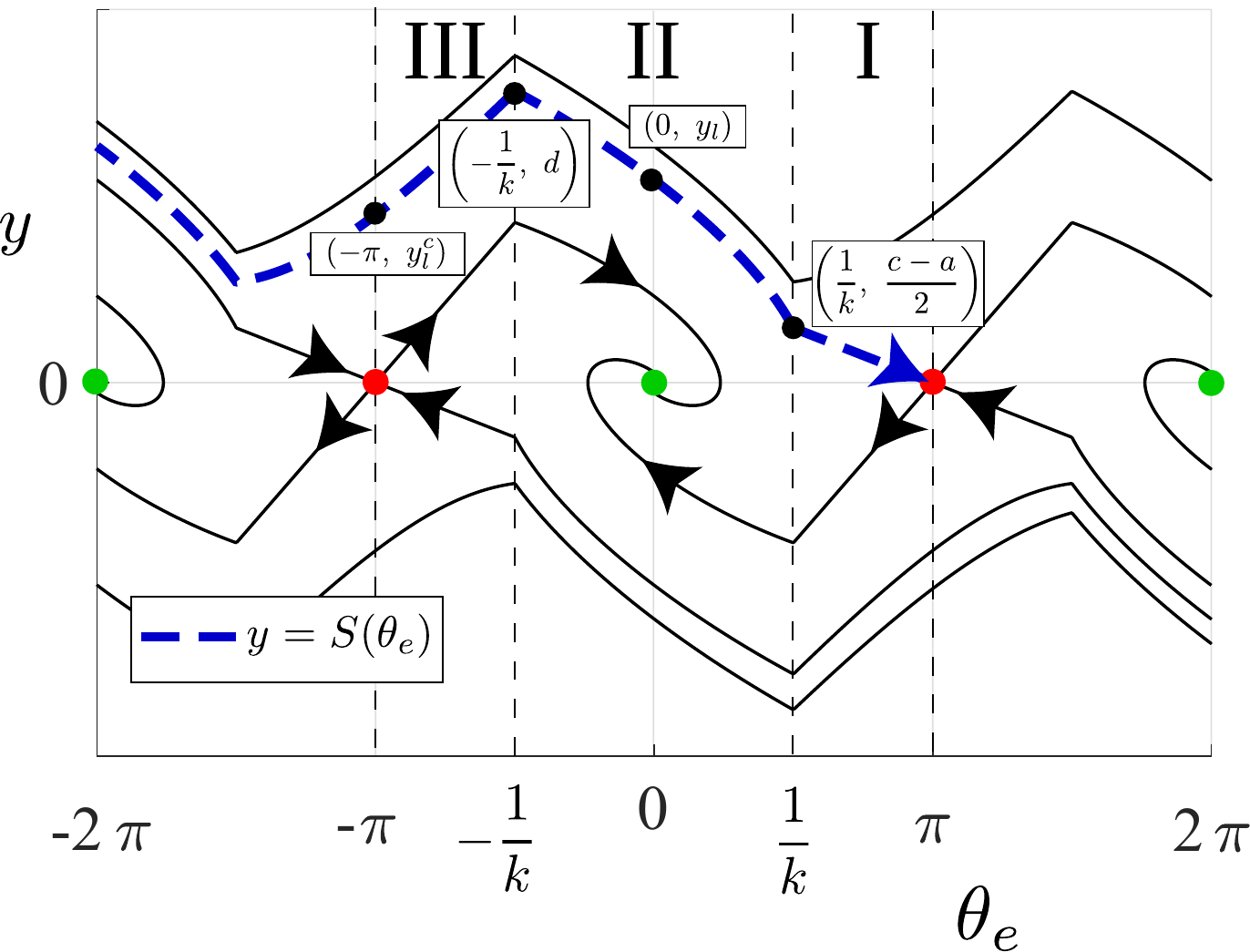}
		\caption{The separatrix integration.
		Firstly, we compute $S(\frac{1}{k})$ and use it as the initial data of the Cauchy problem.
		Secondly, finding its solution in the domain II, we compute $y_l = S(0)$, which is used for the lock-in frequency $\omega_l$ computation (see \eqref{eq:omega_l_formula-appendix}), and $S(-\frac{1}{k})$. 
		Finally, we use $S(-\frac{1}{k})$ as the initial data of the Cauchy problem and find its solution in the domain III, determining $y_l^c = S(-\pi)$, which is used for the conservative lock-in frequency $\omega_l^c$ computation (see \eqref{eq:omega_l^c_formula-appendix}).
		Parameters:
		$\tau_1 = 0.0633$,
		$\tau_2 = 0.0225$,
		$K_{\rm vco} = 250$,
		$k = \frac{2}{\pi}$.
	}
		\label{fig:separatrix integration}
	\end{figure}
	
	\textbf{Domain I}.
	
	The saddle separatrix is locally described by the saddle's eigenvectors
	\begin{equation*}
	\begin{aligned}
	&
	V^s_+
	=
	\begin{pmatrix}
	1\\
	\frac{c - a}{2}
	\end{pmatrix},
	\quad
	V^s_-
	=
	\begin{pmatrix}
	1\\
	\frac{-c - a}{2}.
	\end{pmatrix}
	\end{aligned}
	\end{equation*}
	Eigenvector $V^s_-$ points to a saddle and $V^s_+$ has the opposite direction.
	Since in the considered domain the system is a linear one, then the separatrix coincides with the line corresponding to $V_-^s$:
	\begin{equation*}
	\begin{aligned}
	&S(\theta_e) = \frac{c - a}{2 \left(\pi - \frac{1}{k}\right)} (\pi - \theta_e),
	\quad \frac{1}{k} < \theta_e < \pi.
	\end{aligned}
	\end{equation*}
	Let's obtain the limit value in $\theta_e = \frac{1}{k}$:
	\begin{equation*}
	\begin{aligned}
	& S(\frac{1}{k}) = \frac{c - a}{2}>0.
	\end{aligned}
	\end{equation*}
	
	\textbf{Domain II}.
	If $-\frac{1}{k} < \theta_e(t) < \frac{1}{k}$ then system \eqref{eq:PLL-triangular-after_change_of_variables} is
	\begin{equation}\label{eq:linear equation in interval II}
	\begin{aligned}
	&\dot{y} = -a k y - k \theta_e,\\
	&\dot\theta_e = y.
	\end{aligned}
	\end{equation}
	In the domains $\{y>0\}$ and $\{y<0\}$, variable $\theta_e(t)$ changes monotonically and the behaviour of system \eqref{eq:linear equation in interval II} can be described by the first-order differential equation\footnote{
		The similar transition to the first-order differential equation was used in \cite{Belyustina-1959, HuqueS-2011-exact, HuqueS-2013}.
	}:
	\begin{equation}\label{eq:Chini equation in interval II}
	\begin{aligned}
	& \frac{dy}{d\theta_e} = -ak - \frac{k\theta_e}{y}.
	\end{aligned}
	\end{equation}
	The obtained equation is Chini's equation \cite{Chini-1924, ChebK-2003}, which is a generalization of Abel and Riccati equations.
	The change of variables $z = \frac{y}{\theta_e}$ maps equation \eqref{eq:Chini equation in interval II} into a separable one\footnote{
		The same change of variables was used in \cite{HuqueS-2011-exact, HuqueS-2013}.
	}:
	\begin{equation}\label{eq:PLL-equation-to_integrate interval II}
	\begin{aligned}
	& \frac{z dz}{z^2 + akz + k} = -\frac{d\theta_e}{\theta_e}.
	\end{aligned}
	\end{equation}
	If $z^2+akz+k\ne0$ then solutions of system \eqref{eq:Chini equation in interval II} and system \eqref{eq:PLL-equation-to_integrate interval II} coincide in domains $0 < \theta_e < \frac{1}{k}$ and $-\frac{1}{k} < \theta_e < 0$.
	Depending on the type of an asymptotically stable equilibrium, the following cases appear (see section~\ref{sec:local analysis}):
	\begin{itemize}
		\item $a^2k > 4$ (the equation $z^2+akz+k = 0$ describes the eigenvectors of the stable node),
		\item $a^2k = 4$ (the equation $z^2+akz+k = 0$ describes the eigenvector of the stable degenerate node),
		\item $a^2k < 4$ (here the case $z^2+akz+k = 0$ is not possible).
	\end{itemize}
	
	\textbf{Case $a^2k > 4$}.
	Let's take into account the location of separatrix $y = S(\theta_e)$, satisfying \eqref{eq:Chini equation in interval II}, during its integration on intervals.
	The eigenvectors of the stable node
	\begin{equation*}
	\begin{aligned}
	&
	V^n_+
	=
	\begin{pmatrix}
	1\\
	-\frac{a - b}{2}
	\end{pmatrix},
	\quad
	V^n_-
	=
	\begin{pmatrix}
	1\\
	-\frac{a + b}{2}
	\end{pmatrix}
	\end{aligned}
	\end{equation*}
	are described by lines	$y = - \frac{a+b}{2}k\theta_e$ and $y = - \frac{a-b}{2}k\theta_e$, respectively, and intersect the boundary $\theta_e = \frac{1}{k}$ of domains I and II in points $-\frac{a - b}{2} < 0$ and $-\frac{a + b}{2} < 0$.
	Hence, the separatrix, intersecting the boundary $\theta_e = \frac{1}{k}$ of domains I and II in point $\frac{c - a}{2} > 0$, remains over the eigenvectors within the domain II and satisfies the following inequality: $(S(\theta_e) + \frac{a + b}{2}k\theta_e)(S(\theta_e) + \frac{a - b}{2}k\theta_e) > 0$ as $\theta_e\in[-\frac{1}{k},\frac{1}{k}]$.
	
	Assuming $(z + \frac{a + b}{2}k)(z + \frac{a - b}{2}k) > 0$, the general solution of equation \eqref{eq:PLL-equation-to_integrate interval II} is as follows \footnote{
		Taking derivative of $N_1(z)$, we have
		\begin{equation*}
		\begin{aligned}
		& N_1^\prime(z)
		=
		\frac{1}{2}
		\frac{1}{
			\big(z + \frac{a - b}{2}k\big)^{\frac{b - a}{b}}
			\big(z + \frac{a + b}{2}k\big)^{\frac{b + a}{b}}
		}
		\Big(
		\frac{b-a}{b}
		\big(z + \frac{a - b}{2}k\big)^{\frac{ - a}{b}}
		\big(z + \frac{a + b}{2}k\big)^{\frac{b + a}{b}}
		+\\
		&+
		\frac{b+a}{b}
		\big(z + \frac{a - b}{2}k\big)^{\frac{b - a}{b}}
		\big(z + \frac{a + b}{2}k\big)^{\frac{a}{b}}
		\Big)
		=
		\frac{1}{2}
		\Big(
		\frac{b-a}{b}
		\big(z + \frac{a - b}{2}k\big)^{-1}
		+
		\frac{b+a}{b}
		\big(z + \frac{a + b}{2}k\big)^{-1}
		\Big)
		=\\
		&=
		\frac{1}{2(z + \frac{a - b}{2}k)(z + \frac{a + b}{2}k)}
		\Big(
		\frac{b-a}{b}
		\big(z + \frac{a + b}{2}k\big)
		+
		\frac{b+a}{b}
		\big(z + \frac{a - b}{2}k\big)
		\Big)
		=
		\frac{z}{(z + \frac{a - b}{2}k)(z + \frac{a + b}{2}k)}
		=\\
		&=
		\frac{z}{z^2 + akz + k}.
		\end{aligned}
		\end{equation*}
	}:
	\begin{equation*}
	\begin{aligned}
	& N_1(z) = -\ln|\theta_e| + C
	\end{aligned}
	\end{equation*}
	where
	\begin{equation*}
	\begin{aligned}
	& N_1(z) 
	=
	\frac{1}{2}
	\ln
	\Big(
	\big(z + \frac{a - b}{2}k\big)^{\frac{b - a}{b}}
	\big(z + \frac{a + b}{2}k\big)^{\frac{b + a}{b}}
	\Big),
	\quad
	C = {\rm const}.
	\end{aligned}
	\end{equation*}
	Since for separatrix $y = S(\theta_e)$ inequality $(y + \frac{a + b}{2}k\theta_e)(y + \frac{a - b}{2}k\theta_e) > 0$ is valid, we get that the separatrix on interval $0<\theta_e(t) \le \frac{1}{k}$ satisfies $N(y, \theta_e) = C_{(0, \frac{1}{k})}$ where
	\begin{equation*}
	\begin{aligned}
	& N(y, \theta_e) 
	=
	\frac{1}{2}
	\ln
	\Big(
	\big(y + \frac{a - b}{2}k\theta_e\big)^{\frac{b - a}{b}}
	\big(y + \frac{a + b}{2}k\theta_e\big)^{\frac{b + a}{b}}
	\Big),\\
	&C_{(0, \frac{1}{k})} = 
		\lim\limits_{\theta_e\to\frac{1}{k}-0} N(\frac{c-a}{2}, \theta_e)
		=
		\frac{1}{2}
		\ln
		\Big(
		(\frac{c - b}{2})^{\frac{b - a}{b}}
		(\frac{c + b}{2})^{\frac{b + a}{b}}
		\Big)
		=
		\frac{1}{2}
		\ln
		\Big(
		\pi
		(\frac{c + b}{c - b})^{\frac{a}{b}}
		\Big).
	\end{aligned}
\end{equation*}
	Thus, if $a^2k>4$, then separatrix $y = S(\theta_e)$ in domain $0 < \theta_e(t) \le \frac{1}{k}$ is described by equation
	\begin{equation}\label{eq:separatrix II case 1}
	\begin{aligned}
	& 
	\big(y + \frac{a - b}{2}k\theta_e\big)^{\frac{b - a}{b}}
	\big(y + \frac{a + b}{2}k\theta_e\big)^{\frac{b + a}{b}}
	=
	\pi
	(\frac{c + b}{c - b})^{\frac{a}{b}}.
	\end{aligned}
	\end{equation}
	Substituting $\theta_e \to +0$ into \eqref{eq:separatrix II case 1}, we get
	\begin{equation}\label{eq:equation-for-y2-case1}
	\begin{aligned}
	&y_l = \sqrt{\pi}
	(\frac{c + b}{c - b})^{\frac{a}{2b}}.
	\end{aligned}
	\end{equation}
	Then, substituting \eqref{eq:equation-for-y2-case1} into \eqref{eq:omega_l_formula-appendix}, we get the first case of formula \eqref{eq:lock-in_exact_formula}.
	
	To determine the conservative lock-in frequency, we firstly need to get $d = S(-\frac{1}{k})$, then to obtain the equation for the separatrix in domain III, and, finally, to determine $y_l^c = S(-\pi)$.
	Since the separatrix on interval $-\frac{1}{k} < \theta_e(t) < 0$ satisfies $N(y, \theta_e) = C_{(-\frac{1}{k}, 0)}$ and 	$\lim\limits_{\theta_e\to+0} N(y, \theta_e)  = 
	\lim\limits_{\theta_e\to-0} N(y, \theta_e) 
	=
	\ln y$ as $y>0$, then $C_{(-\frac{1}{k}, 0)} = C_{(0, \frac{1}{k})} = \frac{1}{2}
	\ln
	\Big(
	\pi
	(\frac{c + b}{c - b})^{\frac{a}{b}}
	\Big)$.	
	
	Thus, if $a^2k > 4$, then separatrix $y = S(\theta_e)$ in domain II is described by equation \eqref{eq:separatrix II case 1}.
	Substituting $\theta_e = -\frac{1}{k}$ into \eqref{eq:separatrix II case 1}, we get
	\begin{equation}\label{eq:equation-for-y23-case1}
	\begin{aligned}
	& 
	(d - \frac{a - b}{2})^{\frac{b - a}{b}}
	(d - \frac{a + b}{2})^{\frac{b + a}{b}}
	=
	\pi
	(\frac{c + b}{c - b})^{\frac{a}{b}}.
	\end{aligned}
	\end{equation}
	Since the separatrix is over the eigenvectors ($y > - \frac{a\pm b}{2}k\theta_e$), then
	\begin{equation*}
	\begin{aligned}
	& d
	>
	\frac{a + b}{2}.
	\end{aligned}
	\end{equation*}
	Notice that if $d = \frac{a + b}{2}$, then the left-hand side of equation \eqref{eq:equation-for-y23-case1} equals to zero, but the right-hand side is positive.
	Then the left-hand side increases monotonically as value $d$ increases
	and tends to infinity as $d \to+\infty$.
	Thus, equation \eqref{eq:equation-for-y23-case1} has unique solution $d$ greater than $\frac{a + b}{2}$.
	
	\textbf{Case $a^2k = 4$}.
	
	In domain II, separatrix $y = S(\theta_e)$ is over eigenvector
	\begin{equation*}
		\begin{aligned}
			&
			V^{dn}
			=
			\begin{pmatrix}
				1\\
				-\frac{a}{2}
			\end{pmatrix},
		\end{aligned}
	\end{equation*}
	which is described by line $y = -\frac{2}{a}\theta_e$ and intersects the boundary $\theta_e = \frac{1}{k}$ of domains I and II in point $ - \frac{a}{2} < 0$.
	Hence, the separatrix, intersecting the boundary $\theta_e = \frac{1}{k}$ of domains I and II in point $\frac{c - a}{2} > 0$, remains over the eigenvector within the domain II and satisfies the following inequality: $S(\theta_e) > -\frac{2}{a}\theta_e$.
	
	The general solution of \eqref{eq:PLL-equation-to_integrate interval II} is as follows\footnote{
		Taking derivative of $N_1(z)$, we have
		\begin{equation*}
		\begin{aligned}
		& N_1^\prime(z)
		=
		- \frac{2a}{(2+az)^2}
		+
		\frac{a}{2 + az}
		=
		\frac{a^2z}{(2+az)^2}
		=
		\frac{z}{(\frac{2}{a} + z)^2}
		=
		\frac{z}{z^2 + akz + k}.
		\end{aligned}
		\end{equation*}
	}:
	\begin{equation*}
	\begin{aligned}
	& N_1(z) = -\ln|\theta_e| + C
	\end{aligned}
	\end{equation*}
	where
	\begin{equation*}
	\begin{aligned}
	& N_1(z) = \frac{2}{2+az} + \ln|2+az|,
	\quad 
	C= {\rm const}.
	\end{aligned}
	\end{equation*}
	
	Since for separatrix $y = S(\theta_e)$ inequality $S(\theta_e) > -\frac{2}{a}\theta_e$. is valid, we get that the separatrix on interval $0 < \theta_e(t) \le \frac{1}{k}$ satisfies $N(y, \theta_e) = C_{(0, \frac{1}{k})}$ where
	\begin{equation*}
	\begin{aligned}
	& N(y, \theta_e) 
	=
	\frac{2\theta_e}{2\theta_e+ay} + \ln(2\theta_e + ay),\\
	&C_{(0, \frac{1}{k})} = 
	\lim\limits_{\theta_e\to\frac{1}{k}-0} N(\frac{c-a}{2}, \theta_e)
	=
	\frac{2}{k}\frac{1}{\frac{2}{k} + a\sqrt{\pi} - \frac{a^2}{2}} + \ln(\frac{2}{k} + a\sqrt{\pi} - \frac{a^2}{2})
	=
	\frac{a}{2\sqrt{\pi}} + \ln(a\sqrt{\pi}).
	\end{aligned}
	\end{equation*}
	Thus, if $a^2k = 4$, then separatrix $y = S(\theta_e)$ in domain $0 < \theta_e(t) \le \frac{1}{k}$ is described by equation
	\begin{equation}\label{eq:separatrix II case 2}
	\begin{aligned}
	& 
	\frac{2\theta_e}{2\theta_e+ay} + \ln(2\theta_e + ay)
	=
	\frac{a}{2\sqrt{\pi}} + \ln(a\sqrt{\pi}).
	\end{aligned}
	\end{equation}
	Substituting $\theta_e \to +0$ into \eqref{eq:separatrix II case 2}, we get
	\begin{equation}\label{eq:equation-for-y2-case2}
	\begin{aligned}
	&y_l = \sqrt{\pi}\exp(\frac{a}{2\sqrt{\pi}}).
	\end{aligned}
	\end{equation}
	Then, substituting \eqref{eq:equation-for-y2-case2} into \eqref{eq:omega_l_formula-appendix}, we get the second case of formula \eqref{eq:lock-in_exact_formula}.
	
	To determine the conservative lock-in frequency, we firstly need to determine $d = S(-\frac{1}{k})$.
	Since the separatrix on interval $-\frac{1}{k} < \theta_e(t) < 0$ satisfies $N(y, \theta_e) = C_{(-\frac{1}{k}, 0)}$ and $\lim\limits_{\theta_e\to+0} N(y, \theta_e)  = 
	\lim\limits_{\theta_e\to-0} N(y, \theta_e) 
	=
	\ln(ay)$ as $y>0$,
	then $C_{(-\frac{1}{k}, 0)} = C_{(0, \frac{1}{k})} = \frac{a}{2\sqrt{\pi}} + \ln(a\sqrt{\pi})$.	
	
	Thus, if $a^2k = 4$, then separatrix $y = S(\theta_e)$ in domain II is described by equation \eqref{eq:separatrix II case 2}.	
	Substituting $\theta_e = -\frac{1}{k}$ into \eqref{eq:separatrix II case 2}, we get
	\begin{equation*}
	\begin{aligned}
	& (d - \frac{a}{2})
	\exp\Big(
	\frac{\frac{a}{2}}{\frac{a}{2} - d}
	\Big)
	= 
	\sqrt{\pi}e^{\frac{a}{2\sqrt{\pi}}}.
	\end{aligned}
	\end{equation*}
	Notice that in the considered case it is possible to obtain an explicit formula for $d$:	
	\begin{equation}\label{eq:equation-for-y23-case2}
	\begin{aligned}
	& d 
	= 
	\frac{a}{2}
	\Big(1 + \frac{1}{W(\frac{a}{2\sqrt{\pi}} 
		\exp(-\frac{a}{2\sqrt{\pi}})}\Big)
	\end{aligned}
	\end{equation}
	where $W(x)$ is the Lambert W function\footnote{
		For $x>0$ function $W(x)$ is a single-valued one and can be evaluated in standard numeric computing platforms.}.
	
	\textbf{Case $a^2k < 4$}.
	
	The general solution of \eqref{eq:Chini equation in interval II} is as follows\footnote{
		Taking derivative of $N_1(z)$, we have
		\begin{equation*}
		\begin{aligned}
		& N_1^\prime(z)
		=
		\frac{2z+ak}{2(z^2 + akz + k)}
		-
		\frac{2a}{b^2k}\frac{1}{1 + (\frac{a + \frac{2}{k}z}{b})^2}
		=
		\frac{2z+ak}{2(z^2 + akz + k)}
		-
		\frac{2ak}{b^2k^2 + (ak + 2z)^2}
		=\\
		&=
		\frac{2z+ak}{2(z^2 + akz + k)}
		-
		\frac{2ak}{4z^2 + 4akz + (a^2+b^2)k^2}
		=
		\frac{2z+ak}{2(z^2 + akz + k)}
		-
		\frac{ak}{2(z^2 + akz + k)}
		=
		\frac{z}{z^2 + akz + k}.
		\end{aligned}
		\end{equation*}
	}:
	\begin{equation*}
	\begin{aligned}
	& N_1(z) = -\ln|\theta_e| + C
	\end{aligned}
	\end{equation*}
	where 
	\begin{equation*}
	\begin{aligned}
	& N_1(z) = \frac{1}{2}\ln(z^2 + akz + k) - \frac{a}{b} \arctan(\frac{a + \frac{2}{k}z}{b}),
	\quad
	C= {\rm const}.
	\end{aligned}
	\end{equation*}
	Then separatrix $y = S(\theta_e)$ in domain $0 < \theta_e \le \frac{1}{k}$ satisfies $N(y, \theta_e) = C_{(0,\ \frac{1}{k})}$ where
	\begin{equation*}
	\begin{aligned}
	& N(y, \theta_e) 
	=
	\frac{1}{2}\ln(y^2 + aky\theta_e + k \theta^2_e)
	- 
	\frac{a}{b} 
	\arctan\Big(\frac{a\theta_e + \frac{2}{k}y}{b\theta_e}\Big),\\
	&C_{(0,\ \frac{1}{k})} = 
	\lim\limits_{\theta_e\to\frac{1}{k}-0} N(\frac{c-a}{2}, \theta_e)
	=
	\frac{1}{2}\ln\pi
	- \frac{a}{b} 
	\arctan\frac{c}{b}.
	\end{aligned}
	\end{equation*}
	Thus, if $a^2k < 4$, then separatrix $y = S(\theta_e)$ in domain $0 < \theta_e \le \frac{1}{k}$ is described by equation
	\begin{equation}\label{eq:separatrix II case 3-1}
	\begin{aligned}
	& 
	\frac{1}{2}\ln(y^2 + aky\theta_e + k \theta^2_e)
	- 
	\frac{a}{b} 
	\arctan\Big(\frac{a\theta_e + \frac{2}{k}y}{b\theta_e}\Big)
	=
	\frac{1}{2}\ln\pi
	- \frac{a}{b} 
	\arctan\frac{c}{b}.
	\end{aligned}
	\end{equation}
	Substituting $\theta_e \to +0$ into \eqref{eq:separatrix II case 3-1}, we get
	\begin{equation}\label{eq:equation-for-y2-case3}
	\begin{aligned}
	&y_l = \sqrt{\pi}
	\exp\Big({\frac{a}{b}
		\arctan\frac{b}{c}}\Big).
	\end{aligned}
	\end{equation}
	Then, substituting \eqref{eq:equation-for-y2-case3} into \eqref{eq:omega_l_formula-appendix}, we get the third case of formula \eqref{eq:lock-in_exact_formula}.
	Thus, Theorem~\ref{theorem: lock-in stable} is proved.

	To determine the conservative lock-in frequency, we firstly need to determine $d = S(-\frac{1}{k})$.
	Since the separatrix on interval $-\frac{1}{k} < \theta_e(t) < 0$ satisfies $N(y, \theta_e) = C_{(-\frac{1}{k}, 0)}$ and $\lim\limits_{\theta_e\to+0} N(y, \theta_e)  
	=
	\ln y - \frac{\pi a}{2b}$, 
	$\lim\limits_{\theta_e\to-0} N(y, \theta_e) 
	=
	\ln y + \frac{\pi a}{2b}$,
	then $C_{(-\frac{1}{k}, 0)} - \frac{\pi a}{2b}= C_{(0, \frac{1}{k})} + \frac{\pi a}{2b}$.

	Thus, if $a^2k < 4$, then separatrix $y = S(\theta_e)$ in domain II is described by
	\begin{equation}\label{eq:separatrix II case 3}
	\begin{aligned}
	& 
	\frac{1}{2}\ln(y^2 + aky\theta_e + k \theta^2_e)
	- 
	\frac{a}{b} 
	\arctan\Big(\frac{a\theta_e + \frac{2}{k}y}{b\theta_e}\Big)
	=
	\frac{1}{2}\ln\pi
	- \frac{a}{b} 
	\arctan\frac{c}{b}, \quad \text{if} \; 0 < \theta_e(t) \le \frac{1}{k}\\
	&
	y = y_l, \quad \text{if} \; \theta_e(t) = 0\\
	& 
	\frac{1}{2}\ln(y^2 + aky\theta_e + k \theta^2_e)
	- 
	\frac{a}{b} 
	\arctan\Big(\frac{a\theta_e + \frac{2}{k}y}{b\theta_e}\Big)
	=
	\frac{1}{2}\ln\pi
	+ \frac{a}{b}
	\Big(
	\pi
	-
	\arctan\frac{c}{b}
	\Big), \quad \text{if} \; -\frac{1}{k} \le \theta_e(t) < 0.
	\end{aligned}
	\end{equation}
	Substituting $\theta_e = -\frac{1}{k}$ into \eqref{eq:separatrix II case 3}, we get
	\begin{equation}\label{eq:equation-for-y23-case3-first}
	\begin{aligned}
	& 
	\Big(
	d^2 - ad + \frac{1}{k}
	\Big)
	\exp
	\Big(
	\frac{2a}{b} 
	\arctan\frac{2d - a}{b}
	-
	\frac{\pi a}{b}
	\Big)
	=
	\pi
	\exp
	\Big(\frac{2a}{b}
	\arctan\frac{b}{c}
	\Big).
	\end{aligned}
	\end{equation}	
	Notice that if $d = 0$, then the left-hand side of equation \eqref{eq:equation-for-y23-case1} is less than the right-hand side:
	\begin{equation*}
	\begin{aligned}
	&
	\frac{1}{k}
	\exp
	\Big(
	\frac{2a}{b} 
	\arctan\frac{- a}{b}
	-
	\frac{\pi a}{b}
	\Big)
	<
	\frac{1}{k} 
	<
	\pi
	<
	\pi
	\exp\Big(
	\frac{2a}{b}
	\arctan\frac{b}{c}
	\Big).
	\end{aligned}
	\end{equation*}
	Then the left-hand side increases monotonically as value $d$ increases and tends to infinity as $d \to+\infty$.
	Thus, equation \eqref{eq:equation-for-y23-case3-first} has unique positive solution $d$.
	
	Notice also that if $d = \frac{a}{2}$, then the left-hand side of equation \eqref{eq:equation-for-y23-case1} is less than the right-hand side too:
	\begin{equation*}
	\begin{aligned}
	& 
	b^2
	\exp(-\frac{\pi a}{b})
	<
	a^2
	\exp(-\frac{\pi a}{b})
	< 4\pi \exp(-\frac{\pi a}{b})
	< 4\pi \frac{b^2}{\pi^2 a^2}
	< \frac{4}{\pi} < \pi 
	<
	\pi	\exp\Big(\frac{2a}{b}\arctan\frac{b}{c}\Big).
	\end{aligned}
	\end{equation*}
	Thus, $d > \frac{a}{2}$ and equation \eqref{eq:equation-for-y23-case3-first} can be reduced to the following: \begin{equation}\label{eq:equation-for-y23-case3}
	\begin{aligned}
	& 
	\Big(
	d^2 - ad + \frac{1}{k}
	\Big)
	\exp
	\Big(
	\frac{2a}{b} \arctan\frac{b}{a - 2d}
	\Big)
	=
	\pi
	\exp
	\Big(\frac{2a}{b}
	\arctan\frac{b}{c}
	\Big).
	\end{aligned}
	\end{equation}

	\textbf{Domain III}
	
	If $ -\pi \le \theta_e(t) < -\frac{1}{k}$ then system \eqref{eq:PLL-triangular-after_change_of_variables} is
	\begin{equation}\label{eq:linear equation in interval III}
	\begin{aligned}
	&\dot{y} = \frac{a}{\pi - \frac{1}{k}} y + \frac{1}{\pi - \frac{1}{k}} (\theta_e + \pi),\\
	& \dot{\theta_e} = y.
	\end{aligned}
	\end{equation}
	Analogously to the analysis in domain II, let's study for $y>0$ the first-order differential equation
	\begin{equation}\label{eq:Chini equation in interval III}
	\begin{aligned}
	& \frac{dy}{d\theta_e} = \frac{1}{\pi - \frac{1}{k}}(a + \frac{\theta_e + \pi}{y})
	\end{aligned}
	\end{equation}
	and make the change of variables $z = \frac{y}{\theta_e+\pi}$ mapping equation \eqref{eq:Chini equation in interval III} into a separable one:
	\begin{equation}\label{eq:PLL-equation-to_integrate interval III}
	\begin{aligned}
	&(\pi - \frac{1}{k})\frac{zdz}{\left(\pi - \frac{1}{k}\right)z^2 - az - 1} = - \frac{d\theta_e}{\theta_e + \pi}.
	\end{aligned}
	\end{equation}
	If $-\pi < \theta_e(t) < -\frac{1}{k}$ then the solutions of system \eqref{eq:Chini equation in interval III} and system \eqref{eq:PLL-equation-to_integrate interval III} coincide.
	
	Separatrix $y = S(\theta_e)$ is over the separatrices of saddle $(0,\ -\pi)$, which are described by the equations
	\begin{equation*}
	\begin{aligned}
	&y = \frac{\pm c - a}{2 (\pi - \frac{1}{k})} (-\pi - \theta_e),
	\quad -\pi < \theta_e < -\frac{1}{k}.
	\end{aligned}
	\end{equation*}
	Thus, the following inequality is valid for the separatrix: 
	$(S(\theta_e) + \frac{c - a}{2(\pi - \frac{1}{k})}(\pi + \theta_e))
	(S(\theta_e) - \frac{c + a}{2(\pi - \frac{1}{k})}(\pi + \theta_e)) > 0$.
	
	Assuming $(z + \frac{c - a}{2(\pi - \frac{1}{k})})(z - \frac{c + a}{2(\pi - \frac{1}{k})}) > 0$, the general solution of equation \eqref{eq:PLL-equation-to_integrate interval III} is as follows\footnote{
		Taking derivative of $M_1(z)$, we have
		\begin{equation*}
		\begin{aligned}
		& M_1^\prime(z)
		=
		\frac{1}{2}
		\frac{1}{
			\big(
			z + \frac{c - a}{2(\pi - \frac{1}{k})}
			\big)^{\frac{c - a}{c}}
			\big(
			z - \frac{c + a}{2(\pi - \frac{1}{k})}
			\big)^{\frac{c + a}{c}}
		}
		\Big(
		\frac{c-a}{c}
		\big(z + \frac{c - a}{2(\pi - \frac{1}{k})}\big)^{\frac{- a}{c}}
		\big(z - \frac{c + a}{2(\pi - \frac{1}{k})}\big)^{\frac{c + a}{c}}
		+\\
		&+
		\frac{c+a}{c}
		\big(z + \frac{c - a}{2(\pi - \frac{1}{k})}\big)^{\frac{c - a}{c}}
		\big(z - \frac{c + a}{2(\pi - \frac{1}{k})}\big)^{\frac{a}{c}}
		\Big)
		=
		\frac{1}{2}
		\Big(
		\frac{c-a}{c}
		\big(z + \frac{c - a}{2(\pi - \frac{1}{k})}\big)^{-1}
		+
		\frac{c+a}{c}
		\big(z - \frac{c + a}{2(\pi - \frac{1}{k})}\big)^{-1}
		\Big)
		=\\
		&=
		\frac{1}{2
			(z + \frac{c - a}{2(\pi - \frac{1}{k})})
			(z - \frac{c + a}{2(\pi - \frac{1}{k})})}
		\Big(
		\frac{c-a}{c}
		\big(z - \frac{c + a}{2(\pi - \frac{1}{k})}\big)
		+
		\frac{c+a}{c}
		\big(z + \frac{c - a}{2(\pi - \frac{1}{k})}\big)
		\Big)
		=
		\frac{z}{
			(z + \frac{c - a}{2(\pi - \frac{1}{k})})
			(z - \frac{c + a}{2(\pi - \frac{1}{k})})}
		=\\
		&=
		\frac{z}{
			(z^2 - \frac{a}{\pi - \frac{1}{k}}z - \frac{1}{\pi - \frac{1}{k}})}.
		\end{aligned}
		\end{equation*}
	}:
	\begin{equation*}
	\begin{aligned}
	& M_1(z) = -\ln|\theta_e+\pi| + C
	\end{aligned}
	\end{equation*}
	where 
	\begin{equation*}
	\begin{aligned}
	& M_1(z) 
	=
	\frac{1}{2}
	\ln 
	\Big(
	\big(
	z + \frac{c - a}{2(\pi - \frac{1}{k})}
	\big)^{\frac{c - a}{c}}
	\big(
	z - \frac{c + a}{2(\pi - \frac{1}{k})}
	\big)^{\frac{c + a}{c}}
	\Big),\\
	&C = {\rm const}.
	\end{aligned}
	\end{equation*}
	Since for separatrix $y = S(\theta_e)$ inequality $(y + \frac{c - a}{2(\pi - \frac{1}{k})}(\pi + \theta_e))(y - \frac{c + a}{2(\pi - \frac{1}{k})}(\pi + \theta_e)) > 0$ is valid, we get that the separatrix in domain III satisfies $M(y, \theta_e) = C_{\left(-\pi, -\frac{1}{k}\right)}$ where
	\begin{equation*}
	\begin{aligned}
	&M(y, \theta_e) 
	=
	\frac{1}{2}
	\ln 
	\Big(
	\Big(
	y + \frac{c - a}{2(\pi - \frac{1}{k})}(\pi + \theta_e)
	\Big)^{\frac{c - a}{c}}
	\Big(
	y - \frac{c + a}{2(\pi - \frac{1}{k})}(\pi + \theta_e)
	\Big)^{\frac{c + a}{c}}
	\Big),\\
	&C = C_{\left(-\pi, -\frac{1}{k}\right)} = \lim_{\theta_e \to -\frac{1}{k}-0}
	M(d, \theta_e)
	= 
	\frac{1}{2}
	\ln 
	\Big(
	(d + \frac{c - a}{2})^{\frac{c - a}{c}}
	(d - \frac{c + a}{2})^{\frac{c + a}{c}}
	\Big).
	\end{aligned}
	\end{equation*}
	Thus, separatrix $y = S(\theta_e)$ in domain III is described by equation
	\begin{equation}\label{eq:separatrix III}
	\begin{aligned}
	& 
	\Big(
	y + \frac{c - a}{2(\pi - \frac{1}{k})}(\pi + \theta_e)
	\Big)^{\frac{c - a}{c}}
	\Big(
	y - \frac{c + a}{2(\pi - \frac{1}{k})}(\pi + \theta_e)
	\Big)^{\frac{c + a}{c}}
	=
	(d + \frac{c - a}{2})^{\frac{c - a}{c}}
	(d - \frac{c + a}{2})^{\frac{c + a}{c}}.
	\end{aligned}
	\end{equation}
	To determine the conservative lock-in frequency, we firstly need to determine $y_l^c = S(-\pi)$.
	Substituting $\theta_e = - \pi$ into \eqref{eq:separatrix III}, we get
	\begin{equation}\label{eq:formula for y^III}
	\begin{aligned}
	&y_l^c
	=
	(d + \frac{c - a}{2})^{\frac{c - a}{2c}}
	(d - \frac{c + a}{2})^{\frac{c + a}{2c}}.
	\end{aligned}
	\end{equation}
	
	Substituting \eqref{eq:formula for y^III} into \eqref{eq:omega_l^c_formula-appendix} and taking into account formulae \eqref{eq:equation-for-y23-case1}, \eqref{eq:equation-for-y23-case2}, \eqref{eq:equation-for-y23-case3}, we get \eqref{eq:conservative_lock-in_exact_formula}.
	
	Theorem~\ref{theorem: lock-in stable} and Theorem~\ref{theorem: lock-in unstable} are proved.	
\end{proof}

\appendix{Octave code for Fig.~\ref{fig:separatrix integration}}
Code below can be runned on https://octave-online.net/ in order to obtain phase portrait on Fig.~\ref{fig:separatrix integration} and verify formulae \eqref{eq:lock-in_exact_formula} and \eqref{eq:conservative_lock-in_exact_formula}.
The code simulates trajectories of system \eqref{eq:PLL-triangular-after_change_of_variables} numerically and additionally plots two points: $(0, y_l)$ and $(-\pi, y_l^c)$ where $y_l$ and $y_l^c$ are used in lock-in range formulae \eqref{eq:omega_l_formula-appendix} and \eqref{eq:omega_l^c_formula-appendix}.
Since these points are lying on the separatrices, formulae \eqref{eq:omega_l_formula-appendix} and \eqref{eq:omega_l^c_formula-appendix} are validated numerically.

\lstinputlisting{phase_portrait_octave.m}


\begin{thebibliography}{50}
	\newcommand{\enquote}[1]{``#1''}
	\providecommand{\natexlab}[1]{#1}
	\providecommand{\url}[1]{\texttt{#1}}
	\providecommand{\urlprefix}{URL }
	\expandafter\ifx\csname urlstyle\endcsname\relax
	\providecommand{\doi}[1]{doi:\discretionary{}{}{}#1}\else
	\providecommand{\doi}{doi:\discretionary{}{}{}\begingroup
		\urlstyle{rm}\Url}\fi
	
	\bibitem[{Aleksandrov \emph{et~al.}(2016)Aleksandrov, Kuznetsov, Leonov,
		Neittaanmaki, Yuldashev \& Yuldashev}]{AleksandrovKLNYY-2016-IFAC}
	Aleksandrov, K., Kuznetsov, N., Leonov, G., Neittaanmaki, N., Yuldashev, M. \&
	Yuldashev, R. [2016] \enquote{Computation of the lock-in ranges of
		phase-locked loops with {PI} filter,} \emph{IFAC-PapersOnLine} \textbf{49},
	36--41, \doi{10.1016/j.ifacol.2016.07.971}.
	
	\bibitem[{Alexandrov \emph{et~al.}(2015)Alexandrov, Kuznetsov, Leonov,
		Neittaanmaki \& Seledzhi}]{AlexandrovKLNS-2015-IFAC}
	Alexandrov, K., Kuznetsov, N., Leonov, G., Neittaanmaki, P. \& Seledzhi, S.
	[2015] \enquote{Pull-in range of the {PLL}-based circuits with
		proportionally-integrating filter,} \emph{IFAC-PapersOnLine} \textbf{48},
	720--724, \doi{10.1016/j.ifacol.2015.09.274}.
	
	\bibitem[{Andronov \emph{et~al.}(1937)Andronov, Vitt \&
		Khaikin}]{AndronovVKh-1937}
	Andronov, A., Vitt, E. \& Khaikin, S. [1937] \emph{Theory of Oscillators (in
		Russian)} (ONTI NKTP SSSR), [English transl.: 1966, Pergamon Press].
	
	\bibitem[{Bakaev(1963)}]{Bakaev-1963}
	Bakaev, Y. [1963] \enquote{Stability and dynamical properties of astatic
		frequency synchronization system,} \emph{Radiotekhnika i Elektronika (in
		Russian)} \textbf{8},  513--516.
	
	\bibitem[{Belyustina(1959)}]{Belyustina-1959}
	Belyustina, L. [1959] \enquote{The study of a nonlinear pll system,} \emph{Izv.
		vuzov. Radiofizika (in Russian)} \textbf{2},  277--291.
	
	\bibitem[{Best(2007)}]{Best-2007}
	Best, R. [2007] \emph{Phase locked loops: design, simulation, and applications}
	(McGraw-Hill Professional).
	
	\bibitem[{Best(2018)}]{Best-2018}
	Best, R. [2018] \emph{Costas Loops: Theory, Design, and Simulation} (Springer
	International Publishing).
	
	\bibitem[{Best \emph{et~al.}(2016)Best, Kuznetsov, Leonov, Yuldashev \&
		Yuldashev}]{BestKLYY-2016}
	Best, R., Kuznetsov, N., Leonov, G., Yuldashev, M. \& Yuldashev, R. [2016]
	\enquote{Tutorial on dynamic analysis of the {C}ostas loop,} \emph{IFAC
		Annual Reviews in Control} \textbf{42},  27--49,
	\doi{10.1016/j.arcontrol.2016.08.003}.
	
	\bibitem[{Cheb-Terrab \& Kolokolnikov(2003)}]{ChebK-2003}
	Cheb-Terrab, E. \& Kolokolnikov, T. [2003] \enquote{First-order ordinary
		differential equations, symmetries and linear transformations,}
	\emph{European Journal of Applied Mathematics} \textbf{14},  231--246.
	
	\bibitem[{Chen \emph{et~al.}(2017)Chen, Kuznetsov, Leonov \&
		Mokaev}]{ChenKLM-2017-IJBC}
	Chen, G., Kuznetsov, N., Leonov, G. \& Mokaev, T. [2017] \enquote{Hidden
		attractors on one path: {G}lukhovsky-{D}olzhansky, {L}orenz, and {R}abinovich
		systems,} \emph{International Journal of Bifurcation and Chaos in Applied
		Sciences and Engineering} \textbf{27}, {a}rt. num. 1750115.
	
	\bibitem[{Chini(1924)}]{Chini-1924}
	Chini, M. [1924] \enquote{Sull'integrazione di alcune equazioni differenziali
		del primo ordine,} \emph{Rendiconti Instituto Lombardo (2)} \textbf{57},
	506--511.
	
	\bibitem[{Du \& Swamy(2010)}]{DuS-2010-communication}
	Du, K. \& Swamy, M. [2010] \emph{Wireless Communication Systems: from {RF}
		subsystems to {4G} enabling technologies} (Cambridge University Press).
	
	\bibitem[{Egan(1981)}]{Egan-1981-book}
	Egan, W. [1981] \emph{Frequency synthesis by phase lock}, 1st ed. (John Wiley
	\& Sons, New York).
	
	\bibitem[{Egan(2007)}]{Egan-2007-book}
	Egan, W. [2007] \emph{Phase-Lock Basics}, 2nd ed. (John Wiley \& Sons, New
	York).
	
	\bibitem[{Gardner(1966)}]{Gardner-1966}
	Gardner, F. [1966] \emph{Phaselock Techniques} (John Wiley \& Sons, New York).
	
	\bibitem[{Gardner(2005)}]{Gardner-2005-book}
	Gardner, F. [2005] \emph{Phaselock Techniques}, 3rd ed. (John Wiley \& Sons,
	New York).
	
	\bibitem[{Gubar'(1961)}]{Gubar-1961}
	Gubar', N. [1961] \enquote{Investigation of a piecewise linear dynamical system
		with three parameters,} \emph{Journal of Applied Mathematics and Mechanics}
	\textbf{25},  1011--1023.
	
	\bibitem[{Huque(2011)}]{Huque-2011-PhD}
	Huque, A. [2011] \emph{A new derivation of the pull-out frequency for
		second-order phase lock loops employing triangular and sinusoidal phase
		detectors} (The University of Alabama in Huntsville), {P}h.~{D}. thesis.
	
	\bibitem[{Huque \& Stensby(2011)}]{HuqueS-2011-exact}
	Huque, A. \& Stensby, J. [2011] \enquote{{An exact formula for the pull-out
			frequency of a 2nd-order type II phase lock loop},} \emph{IEEE Communications
		Letters} \textbf{15},  1384--1387.
	
	\bibitem[{Huque \& Stensby(2013)}]{HuqueS-2013}
	Huque, A. \& Stensby, J. [2013] \enquote{An analytical approximation for the
		pull-out frequency of a {PLL} employing a sinusoidal phase detector,}
	\emph{ETRI Journal} \textbf{35},  218--225.
	
	\bibitem[{Kaplan \& Hegarty(2017)}]{KaplanH-2017-GPS}
	Kaplan, E. \& Hegarty, C. [2017] \emph{Understanding GPS/GNSS: Principles and
		Applications}, 3rd ed. (Artech House).
	
	\bibitem[{Kapranov(1956)}]{Kapranov-1956}
	Kapranov, M. [1956] \enquote{The lock-in band of a phase locked loop,}
	\emph{Radiotekhnika (in Russian)} \textbf{11},  37--52.
	
	\bibitem[{Karimi-Ghartemani(2014)}]{Karimi-Ghartemani-2014}
	Karimi-Ghartemani, M. [2014] \emph{Enhanced phase-locked loop structures for
		power and energy applications} (John Wiley \& Sons).
	
	\bibitem[{Kolumb\'an(2005)}]{Kolumban-2005}
	Kolumb\'an, G. [2005] \emph{The Encyclopedia of RF and Microwave Engineering,
		Phase-locked loops}, Vol.~4 (John Wiley \& Sons, New-York).
	
	\bibitem[{Kuznetsov(2020)}]{Kuznetsov-2020-TiSU}
	Kuznetsov, N. [2020] \enquote{Theory of hidden oscillations and stability of
		control systems,} \emph{Journal of Computer and Systems Sciences
		International} ,  647--668\doi{10.1134/S1064230720050093}.
	
	\bibitem[{Kuznetsov \emph{et~al.}(2019{\natexlab{a}})Kuznetsov, Blagov,
		Alexandrov, Yuldashev \& Yuldashev}]{KuznetsovBAYY-2019}
	Kuznetsov, N., Blagov, M., Alexandrov, K., Yuldashev, M. \& Yuldashev, R.
	[2019{\natexlab{a}}] \enquote{Lock-in range of classical {PLL} with
		piecewise-linear phase detector characteristic,} \emph{Differencialnie
		Uravnenia i Protsesy Upravlenia (Differential Equations and Control
		Processes)} ,  74--89.
	
	\bibitem[{Kuznetsov \emph{et~al.}(2022)Kuznetsov, Kolumb\'an, Belyaev, Tulaev,
		Yuldashev \& Yuldashev}]{KuznetsovKBTYY-2022-GN}
	Kuznetsov, N., Kolumb\'an, G., Belyaev, Y., Tulaev, A., Yuldashev, M. \&
	Yuldashev, R. [2022] \enquote{Estimation of {PLL} impact on {MEMS}-gyroscopes
		parameters,} \emph{Gyroscopy and Navigation} (in print).
	
	\bibitem[{Kuznetsov \emph{et~al.}(2015)Kuznetsov, Leonov, Yuldashev \&
		Yuldashev}]{KuznetsovLYY-2015-IFAC-Ranges}
	Kuznetsov, N., Leonov, G., Yuldashev, M. \& Yuldashev, R. [2015]
	\enquote{Rigorous mathematical definitions of the hold-in and pull-in ranges
		for phase-locked loops,} \emph{IFAC-PapersOnLine} \textbf{48},  710--713,
	\doi{10.1016/j.ifacol.2015.09.272}.
	
	\bibitem[{Kuznetsov \emph{et~al.}(2017)Kuznetsov, Leonov, Yuldashev \&
		Yuldashev}]{KuznetsovLYY-2017-CNSNS}
	Kuznetsov, N., Leonov, G., Yuldashev, M. \& Yuldashev, R. [2017]
	\enquote{Hidden attractors in dynamical models of phase-locked loop circuits:
		limitations of simulation in {MATLAB} and {SPICE},} \emph{Communications in
		Nonlinear Science and Numerical Simulation} \textbf{51},  39--49,
	\doi{10.1016/j.cnsns.2017.03.010}.
	
	\bibitem[{Kuznetsov \emph{et~al.}(2019{\natexlab{b}})Kuznetsov, Lobachev,
		Yuldashev \& Yuldashev}]{KuznetsovLYY-2019-DAN}
	Kuznetsov, N., Lobachev, M., Yuldashev, M. \& Yuldashev, R.
	[2019{\natexlab{b}}] \enquote{On the {G}ardner problem for phase-locked
		loops,} \emph{Doklady Mathematics} \textbf{100},  568--570,
	\doi{10.1134/S1064562419060218}.
	
	\bibitem[{Kuznetsov \emph{et~al.}(2021{\natexlab{a}})Kuznetsov, Lobachev,
		Yuldashev \& Yuldashev}]{KuznetsovLYY-2021-TCASII}
	Kuznetsov, N., Lobachev, M., Yuldashev, M. \& Yuldashev, R.
	[2021{\natexlab{a}}] \enquote{The {E}gan problem on the pull-in range of type
		2 {PLL}s,} \emph{Transactions on Circuits and Systems II: Express Briefs}
	\textbf{68},  1467--1471, \doi{10.1109/TCSII.2020.3038075}.
	
	\bibitem[{Kuznetsov \emph{et~al.}(2020{\natexlab{a}})Kuznetsov, Lobachev,
		Yuldashev, Yuldashev \& Kolumb\'an}]{KuznetsovLYYK-2020-IFACWC}
	Kuznetsov, N., Lobachev, M., Yuldashev, M., Yuldashev, R. \& Kolumb\'an, G.
	[2020{\natexlab{a}}] \enquote{Harmonic balance analysis of pull-in range and
		oscillatory behavior of third-order type 2 analog {PLL}s,}
	\emph{IFAC-PapersOnLine} \textbf{53},  6378--6383.
	
	\bibitem[{Kuznetsov \emph{et~al.}(2020{\natexlab{b}})Kuznetsov, Lobachev,
		Yuldashev, Yuldashev, Kudryashova, Kuznetsova, Rosenwasser \&
		Abramovich}]{KuznetsovLYYKKRA-2020-ECC}
	Kuznetsov, N., Lobachev, M., Yuldashev, M., Yuldashev, R., Kudryashova, E.,
	Kuznetsova, O., Rosenwasser, E. \& Abramovich, S. [2020{\natexlab{b}}]
	\enquote{The birth of the global stability theory and the theory of hidden
		oscillations,}  \emph{2020 European Control Conference Proceedings}, pp.
	769--774, \doi{10.23919/ECC51009.2020.9143726}.
	
	\bibitem[{Kuznetsov \emph{et~al.}(2021{\natexlab{b}})Kuznetsov, Lobachev,
		Yuldashev, Yuldashev, Volskiy \& Sorokin}]{KuznetsovLYYVS-2021-DAN}
	Kuznetsov, N., Lobachev, M., Yuldashev, M., Yuldashev, R., Volskiy, S. \&
	Sorokin, D. [2021{\natexlab{b}}] \enquote{On the generalized {G}ardner
		problem for phase-locked loops in electrical grids,} \emph{Doklady
		Mathematics} \textbf{103},  157--161.
	
	\bibitem[{Kuznetsov \emph{et~al.}(2021{\natexlab{c}})Kuznetsov, Matveev,
		Yuldashev \& Yuldashev}]{KuznetsovMYY-2021-TCAS}
	Kuznetsov, N., Matveev, A., Yuldashev, M. \& Yuldashev, R. [2021{\natexlab{c}}]
	\enquote{Nonlinear analysis of charge-pump phase-locked loop: The hold-in and
		pull-in ranges,} \emph{IEEE Transactions on Circuits and Systems I: Regular
		Papers} \textbf{68},  4049--4061, \doi{10.1109/TCSI.2021.3101529}.
	
	\bibitem[{Kuznetsov \emph{et~al.}(2020{\natexlab{c}})Kuznetsov, Volskiy,
		Sorokin, Yuldashev \& Yuldashev}]{KuznetsovVSYY-2020}
	Kuznetsov, N., Volskiy, S., Sorokin, D., Yuldashev, M. \& Yuldashev, R.
	[2020{\natexlab{c}}] \enquote{Power supply system for aircraft with electric
		traction,}  \emph{2020 21st International Scientific Conference on Electric
		Power Engineering (EPE)}, pp. 1--5, \doi{10.1109/EPE51172.2020.9269181}.
	
	\bibitem[{Leonov \& Kuznetsov(2013)}]{LeonovK-2013-IJBC}
	Leonov, G. \& Kuznetsov, N. [2013] \enquote{Hidden attractors in dynamical
		systems. {F}rom hidden oscillations in {H}ilbert-{K}olmogorov, {A}izerman,
		and {K}alman problems to hidden chaotic attractors in {C}hua circuits,}
	\emph{International Journal of Bifurcation and Chaos in Applied Sciences and
		Engineering} \textbf{23}, \doi{10.1142/S0218127413300024}, {a}rt. no.
	1330002.
	
	\bibitem[{Leonov \& Kuznetsov(2014)}]{LeonovK-2014-book}
	Leonov, G. \& Kuznetsov, N. [2014] \emph{Nonlinear mathematical models of
		phase-locked loops. Stability and oscillations} (Cambridge Scientific
	Publishers).
	
	\bibitem[{Leonov \emph{et~al.}(2012)Leonov, Kuznetsov, Yuldashev \&
		Yuldashev}]{LeonovKYY-2012-TCASII}
	Leonov, G., Kuznetsov, N., Yuldashev, M. \& Yuldashev, R. [2012]
	\enquote{Analytical method for computation of phase-detector characteristic,}
	\emph{IEEE Transactions on Circuits and Systems - II: Express Briefs}
	\textbf{59},  633--647, \doi{10.1109/TCSII.2012.2213362}.
	
	\bibitem[{Leonov \emph{et~al.}(2015{\natexlab{a}})Leonov, Kuznetsov, Yuldashev
		\& Yuldashev}]{LeonovKYY-2015-TCAS}
	Leonov, G., Kuznetsov, N., Yuldashev, M. \& Yuldashev, R. [2015{\natexlab{a}}]
	\enquote{Hold-in, pull-in, and lock-in ranges of {PLL} circuits: rigorous
		mathematical definitions and limitations of classical theory,} \emph{IEEE
		Transactions on Circuits and Systems--I: Regular Papers} \textbf{62},
	2454--2464, \doi{10.1109/TCSI.2015.2476295}.
	
	\bibitem[{Leonov \emph{et~al.}(2015{\natexlab{b}})Leonov, Kuznetsov, Yuldashev
		\& Yuldashev}]{LeonovKYY-2015-SIGPRO}
	Leonov, G., Kuznetsov, N., Yuldashev, M. \& Yuldashev, R. [2015{\natexlab{b}}]
	\enquote{Nonlinear dynamical model of {C}ostas loop and an approach to the
		analysis of its stability in the large,} \emph{Signal Processing}
	\textbf{108},  124--135, \doi{10.1016/j.sigpro.2014.08.033}.
	
	\bibitem[{Rosenkranz \& Schaefer(2016)}]{rosenkranz2016receiver}
	Rosenkranz, W. \& Schaefer, S. [2016] \enquote{Receiver design for optical
		inter-satellite links based on digital signal processing,}  \emph{18th
		International Conference on Transparent Optical Networks (ICTON)} (IEEE), pp.
	1--4.
	
	\bibitem[{Shakhtarin(1969)}]{Shakhtarin-1969}
	Shakhtarin, B. [1969] \enquote{Study of a piecewise-linear system of
		phase-locked frequency control,} \emph{Radiotechnica and electronika (in
		{R}ussian)} ,  1415--1424.
	
	\bibitem[{Stensby(1997)}]{Stensby-1997}
	Stensby, J. [1997] \emph{Phase-Locked Loops: Theory and Applications} (Taylor
	\& Francis).
	
	\bibitem[{Tricomi(1933)}]{Tricomi-1933}
	Tricomi, F. [1933] \enquote{Integrazione di unequazione differenziale
		presentatasi in elettrotechnica,} \emph{Annali della R. Shcuola Normale
		Superiore di Pisa} \textbf{2},  1--20.
	
	\bibitem[{Viterbi(1959)}]{Viterbi-1959}
	Viterbi, A. [1959] \enquote{Acquisition and tracking behavior of phase-locked
		loops,} \emph{Jet Propulsion Laboratory, California Institute of Technology,
		Pasadena, External Publ} \textbf{673}.
	
	\bibitem[{Viterbi(1966)}]{Viterbi-1966}
	Viterbi, A. [1966] \emph{Principles of coherent communications} (McGraw-Hill,
	New York).
	
	\bibitem[{Zelenskii \emph{et~al.}(2021)Zelenskii, Gapon, Voronin, Semenishchev,
		Khamidullin \& Cen}]{ZelenskyGVSKC-2021}
	Zelenskii, A., Gapon, N., Voronin, V., Semenishchev, E., Khamidullin, I. \&
	Cen, Y. [2021] \enquote{Robot navigation using modified slam procedure based
		on depth image reconstruction,}  \emph{Artificial Intelligence and Machine
		Learning in Defense Applications III} (SPIE), pp. 73--82.
	
	\bibitem[{Zelensky \emph{et~al.}(2021)Zelensky, Semenishchev, Alepko, Abdullin,
		Ilyukhin \& Voronin}]{ZelenskySAAIV-2021}
	Zelensky, A., Semenishchev, E., Alepko, A., Abdullin, T., Ilyukhin, Y. \&
	Voronin, V. [2021] \enquote{Using neuro-accelerators on fpgas in
		collaborative robotics tasks,}  \emph{Optical Instrument Science, Technology,
		and Applications II} (SPIE), pp. 98--102.
	
	\bibitem[{Zhu \emph{et~al.}(2020)Zhu, Wei, Escalante-Gonz{\'a}lez \&
		Kuznetsov}]{ZhuWEK-2020}
	Zhu, B., Wei, Z., Escalante-Gonz{\'a}lez, R. \& Kuznetsov, N.~V. [2020]
	\enquote{Existence of homoclinic orbits and heteroclinic cycle in a class of
		three-dimensional piecewise linear systems with three switching manifolds,}
	\emph{Chaos: An Interdisciplinary Journal of Nonlinear Science} \textbf{30},
	art. num. 123143.
	
\end{thebibliography}

\end{document}